\documentclass[onefignum,onetabnum]{siamart171218}



\usepackage{lipsum}
\usepackage{amsfonts}
\usepackage{graphicx}
\usepackage{epstopdf}
\usepackage{algorithmic}
\ifpdf
  \DeclareGraphicsExtensions{.eps,.pdf,.png,.jpg}
\else
  \DeclareGraphicsExtensions{.eps}
\fi

\allowdisplaybreaks


\newcommand{\rd}{\, \mathrm{d}}
\newcommand{\bsone}{\boldsymbol{1}}
\newcommand{\EE}{\mathbb{E}}
\newcommand{\PP}{\mathbb{P}}

\newcommand{\RR}{\mathbb{R}}
\newcommand{\ZZ}{\mathbb{Z}}
\newcommand{\Ecal}{\mathcal{E}}
\newcommand{\Xcal}{\mathcal{X}}
\newcommand{\Ycal}{\mathcal{Y}}

\newcommand{\Tcal}{\mathcal{T}}


\newsiamremark{example}{Example}
\newsiamremark{remark}{Remark}
\newsiamremark{hypothesis}{Hypothesis}
\crefname{hypothesis}{Hypothesis}{Hypotheses}
\newsiamthm{claim}{Claim}

\headers{MLMC stochastic optimization of Bayesian design}{T. Goda, T. Hironaka, W. Kitade, and A. Foster}

\title{Unbiased MLMC stochastic gradient-based optimization of Bayesian experimental designs\thanks{Submitted to the editors DATE.
\funding{The work of T.G.\ is supported by JSPS KAKENHI Grant Number 20K0374. The work of A.F. is kindly supported by EPSRC grant no. EP/N509711/1.}}}

\author{Takashi Goda\thanks{School of Engineering, University of Tokyo, Tokyo, Japan 
  (\email{goda@frcer.t.u-tokyo.ac.jp},
  \email{hironaka-tomohiko@g.ecc.u-tokyo.ac.jp},
  \email{kitade-wataru114@g.ecc.u-tokyo.ac.jp}).}
\and Tomohiko Hironaka\footnotemark[2]
\and Wataru Kitade\footnotemark[2]
\and Adam Foster\thanks{Department of Statistics, University of Oxford, Oxford, UK (\email{adam.foster@stats.ox.ac.uk}).}}

\usepackage{amsopn}
\DeclareMathOperator{\diag}{diag}

\makeatletter
\newcommand*{\addFileDependency}[1]{
  \typeout{(#1)}
  \@addtofilelist{#1}
  \IfFileExists{#1}{}{\typeout{No file #1.}}
}
\makeatother




\begin{document}

\maketitle

\begin{abstract}
In this paper we propose an efficient stochastic optimization algorithm to search for Bayesian experimental designs such that the expected information gain is maximized. The gradient of the expected information gain with respect to experimental design parameters is given by a nested expectation, for which the standard Monte Carlo method using a fixed number of inner samples yields a biased estimator. In this paper, applying the idea of randomized multilevel Monte Carlo (MLMC) methods, we introduce an unbiased Monte Carlo estimator for the gradient of the expected information gain with finite expected squared $\ell_2$-norm and finite expected computational cost per sample. Our unbiased estimator can be combined well with stochastic gradient descent algorithms, which results in our proposal of an optimization algorithm to search for an optimal Bayesian experimental design. Numerical experiments confirm that our proposed algorithm works well not only for a simple test problem but also for a more realistic pharmacokinetic problem.
\end{abstract}

\begin{keywords}
  Bayesian experimental design, expected information gain, multilevel Monte Carlo, nested expectation, stochastic gradient descent
\end{keywords}

\begin{AMS}
  62K05, 62L20, 65C05, 92C45, 94A17
\end{AMS}

\section{Introduction}\label{sec:intro}
In this paper we study optimization of Bayesian experimental designs which aim to maximize the expected amount of information experimental outcomes convey about unobservable, or hidden/latent, random variables of interest by carefully designing an experimental setup. Here we measure the expected amount of information by the Shannon's expected information gain whose definition is given below. Our motivation comes from applications to a number of disciplines, such as mechanical engineering \cite{R03}, neuroscience \cite{VTHR12}, bioinformatics \cite{SPPP13}, psychology \cite{MCP13}, and pharmacokinetics \cite{RDTP14,RDP15} among many others.

Let $\theta=(\theta_1,\ldots,\theta_s)\in \Theta\subseteq \RR^s$ be a vector of continuous unobservable random variables, and we denote the prior probability density of $\theta$ by $\pi_0(\theta)$. The information entropy, or the differential entropy, of $\theta$ is defined by
\[ \EE_\theta\left[-\log \pi_0(\theta)\right] = \int_{\Theta}-\pi_0(\theta)\log \pi_0(\theta) \rd \theta. \]
Let us consider a situation where, by conducting some experiments under an experimental design $\xi$, an observation $Y=(Y_1,\ldots,Y_t)\in \Ycal\subseteq \RR^t$ is obtained according to the forward model
\begin{align}\label{eq:forward} 
Y = f_\xi(\theta,\epsilon), 
\end{align}
where $\epsilon=(\epsilon_1,\ldots,\epsilon_{s'})\in \Ecal \subseteq \RR^{s'}$,  representing the observation noise, is another vector of continuous random variables with its density $\varphi(\epsilon)$, and $f_\xi$ is a deterministic bi-variate function parametrized by the design $\xi$, possibly with multiple outputs. Here we assume that the experimental design $\xi$ is controllable and can be chosen as an element in an open set $\Xcal\subset \RR^{d}$. Throughout this paper, we assume that the domain $\Ycal$ is independent of $\xi$, that $\epsilon$ is independent both of $\theta$ and $\xi$, and also that the likelihood function $\rho(Y\mid \theta,\xi)$ is strictly positive and can be computed explicitly with unit cost for any pair of $\theta, \xi$ and $Y$. As is well known, Bayes' theorem states that the posterior probability density of $\theta$ given $Y$, denoted by $\pi^{Y\mid \xi}$, is given by
\begin{align}\label{eq:bayes} \pi^{Y\mid \xi}(\theta) = \frac{\rho(Y\mid \theta,\xi)\pi_0(\theta)}{\rho(Y\mid \xi)}, \end{align}
with $\rho(Y\mid \xi)$ being the marginal likelihood of $Y$, i.e.,
\[ \rho(Y\mid \xi)=\EE_{\theta}\left[\rho(Y\mid \theta,\xi)\right]=\int_{\Theta}\rho(Y\mid \theta,\xi)\pi_0(\theta)\rd \theta,\]
see for instance \cite{S10}. Then, the posterior information entropy of $\theta$ after observing $Y$ is given by
\[ \EE_{\theta\mid Y,\xi}\left[-\log \pi^{Y\mid \xi}(\theta)\right] = \int_{\Theta}-\pi^{Y\mid \xi}(\theta) \log \pi^{Y\mid \xi}(\theta) \rd \theta, \]
and hence, the expected posterior information entropy of $\theta$ by conducting an experiment under an experimental design $\xi$ is given by integrating the posterior information entropy of $\theta$ over $Y$ using the marginal likelihood $\rho(Y\mid \xi)$, i.e., 
\[ \EE_{Y\mid \xi}\EE_{\theta\mid Y,\xi}\left[-\log \pi^{Y\mid \xi}(\theta)\right] = \int_{\Ycal}\int_{\Theta}- \pi^{Y\mid \xi}(\theta) \log \pi^{Y\mid \xi}(\theta)\rd \theta\, \rho(Y\mid \xi) \rd Y. \]
Now the difference
\[ U(\xi):=\EE_\theta\left[-\log \pi_0(\theta)\right]- \EE_{Y\mid \xi}\EE_{\theta\mid Y,\xi}\left[-\log \pi^{Y\mid \xi}(\theta)\right] \]
is called the \emph{expected information gain}, the quantity originally introduced in \cite{L56} as a measure of experimental designs. By using Bayes' theorem \eqref{eq:bayes}, we see that $U(\xi)$ is equivalently given by
\begin{align}
U(\xi) & = \EE_\theta \EE_{Y\mid \theta, \xi}\left[ \log \rho(Y\mid \theta,\xi)\right]- \EE_{Y\mid \xi}\left[ \log \rho(Y\mid \xi)\right] \notag \\
& = \EE_\theta \EE_{Y\mid \theta,\xi}\left[ \log \rho(Y\mid \theta,\xi)\right]- \EE_{Y\mid \xi}\left[ \log \EE_{\theta}\left[ \rho(Y\mid \theta,\xi)\right] \right]. \label{eq:eig} 
\end{align}

The aim of Bayesian experimental designs is to construct an optimal experimental design $\xi=\xi^*$ which maximizes the expected information gain $U$ \cite{CV95}. As can be seen from the second term of \eqref{eq:eig}, however, estimating $U(\xi)$ is inherently a nested expectation problem with an outer expectation with respect to $Y$ and an inner expectation with respect to $\theta$, which has been considered computationally challenging. The standard, nested Monte Carlo method generates $N$ outer random samples for $Y$ first and then, for each sample of $Y$, generates $M$ inner random samples for $\theta$. To estimate $U(\xi)$ with root-mean-square accuracy $\varepsilon$,\footnote{Here and in what follows, the difference between the noise $\epsilon$ and the accuracy $\varepsilon$ should not be confused.} we typically need $N=O(\varepsilon^{-2})$ and $M=O(\varepsilon^{-1})$, resulting in a total computational complexity of $O(\varepsilon^{-3})$ \cite{R03,BDELT18,RCYWW18}. Recently there have been some attempts in \cite{GHI20,BDETxx} to reduce this cost to $O(\varepsilon^{-2})$ or $O(\varepsilon^{-2}(\log \varepsilon^{-1})^2)$ by applying a multilevel Monte Carlo (MLMC) method \cite{G08,G15} in conjunction with Laplace approximation-based importance sampling \cite{LSTW13}. Here the difference between the orders of complexity for the MLMC method is a direct consequence from the basic MLMC theorem, see for instance \cite[Theorem~2.1]{G15}, which itself depends on the properties of the constructed MLMC estimators. Nevertheless, these results are an intermediate step towards an efficient construction of optimal experimental designs since design optimization has been left behind.

In this paper we deal with this optimization problem more directly. More precisely, under the assumption that the experimental setup, or the set of design parameters, $\xi$ lives in a continuous space such that $U$ is differentiable with respect to $\xi$, we consider applying stochastic gradient descent optimizations to search for an optimal $\xi$. As we shall see, the gradient $\nabla_\xi U$ is again given by a nested expectation, for which the standard, nested Monte Carlo method using a fixed number of inner samples yields a biased estimator. By applying an unbiased MLMC method from \cite{RG15}, a randomized version of the original MLMC method, we can construct an unbiased estimator of $\nabla_\xi U$. This way, in this paper, we arrive at a stochastic gradient-based optimization algorithm in which unbiased random samples to estimate $\nabla_\xi U$ are generated at each iteration step.

Here we have to mention that the idea of using stochastic gradient-based methods in Bayesian experimental designs already exists in the literature \cite{HM14,FJBHTRG19,FJOTR20,CDELT20,KG20}. In particular, a work by Carlon et al.\ \cite{CDELT20} takes a similar standpoint in that an analytical expression of the gradient $\nabla_\xi U$ is derived and then stochastic gradient-based method is applied in conjunction with Monte Carlo estimation of $\nabla_\xi U$. However, the expression of $\nabla_\xi U$ given in \cite[Proposition~1]{CDELT20} is proven only for the additive Gaussian noise $\epsilon$, that is, the case where the forward model is given by the form $Y=f_{\xi}(\theta)+\epsilon$ with $\epsilon\sim N(0,\Sigma),$ and the standard (biased) Monte Carlo estimator is used at each iteration step within stochastic gradient-based methods. In this paper we consider a more general form of the forward model as shown in \eqref{eq:forward}, which is useful in some applications \cite{RDTP14,RDP15}. Moreover, given that stochastic gradient-based methods are usually established under the assumption that each sample is drawn from the underlying true distribution, using an unbiased estimator of $\nabla_\xi U$ should be favorable, and by doing so, we do not need to take care of the bias-variance tradeoff. Although application of MLMC methods to stochastic approximation algorithms have been investigated recently in \cite{F16,DG19}, neither of them considers using a randomized MLMC method to generate unbiased random samples at each iteration step.

The rest of this paper is organized as follows. In Section~\ref{sec:gradient}, we provide an analytical expression of the gradient $\nabla_\xi U$ and also briefly review some of stochastic gradient-based optimization methods. Although there are a number of stochastic optimization algorithms, one can use any of them in our proposal to optimize Bayesian experimental designs (Algorithm~\ref{alg:mlmc_sgd}), and we do not give any recommendation on which method should be used in our algorithm, since it is not the objective of this paper. Again we emphasize that the main contribution of this paper is to provide an unbiased estimator for the gradient $\nabla_\xi U$, which is non-trivial but a key assumption in stochastic gradient-based optimization. In Section~\ref{sec:mlmc}, after introducing a standard, nested Monte Carlo estimator of $\nabla_\xi U$, which is obviously biased, we provide an unbiased, multilevel Monte Carlo estimator of $\nabla_\xi U$ and prove under some conditions that our estimator has a finite expected squared $\ell_2$-norm with finite computational cost per sample. Our proposal for optimizing Bayesian experimental designs is given in Algorithm~\ref{alg:mlmc_sgd}. To demonstrate the effectiveness of our proposed algorithm, we conduct numerical experiments not only for a simple test problem but also for a more realistic pharmacokinetic (PK) problem in Section~\ref{sec:numerics}. We conclude this paper with some remarks in Section~\ref{sec:end}.

\section{Stochastic gradient-based optimization}\label{sec:gradient}
\subsection{Gradient of expected information gains}
In what follows, we give an explicit form of the gradient $\nabla_\xi U$. As a preparation, let us rewrite the expected information gain $U(\xi)$ according to \eqref{eq:forward} in the following way. First, by noting that generating $Y$ randomly conditional on $\theta$ and $\xi$ is equivalent to computing $f_\xi(\theta,\epsilon)$ for a randomly generated $\epsilon$ with both $\theta$ and $\epsilon$ given, the independence between $\epsilon$ and the pair $(\theta,\xi)$ ensures that the first term of \eqref{eq:eig} is equal to
\[ \EE_{\theta}\EE_{\epsilon}\left[ \log \rho(f_{\xi}(\theta,\epsilon)\mid \theta,\xi)\right]=\EE_{\theta,\epsilon}\left[ \log \rho(f_{\xi}(\theta,\epsilon)\mid \theta,\xi)\right]. \]
Similarly, generating $Y$ randomly conditional only on $\xi$ is equivalent to computing $f_\xi(\theta,\epsilon)$ for randomly generated $\theta$ and $\epsilon$ with a fixed $\xi$. Therefore, by denoting an i.i.d.\ copy of $\theta$ by $\theta'$, the second term of \eqref{eq:eig} is equal to
\[ \EE_{Y\mid \xi}\left[ \log \EE_{\theta'}\left[ \rho(Y\mid \theta',\xi)\right] \right]=\EE_{\theta,\epsilon}\left[ \log \EE_{\theta'}\left[ \rho(f_\xi(\theta,\epsilon)\mid \theta',\xi)\right] \right]. \]
Thus we end up with the following expression of $U(\xi)$:
\begin{align} \label{eq:eig2}
U(\xi)=\EE_{\theta, \epsilon}\left[ \log \rho(f_{\xi}(\theta,\epsilon)\mid \theta,\xi)\right]- \EE_{\theta, \epsilon}\left[ \log \EE_{\theta'}\left[ \rho(f_{\xi}(\theta,\epsilon)\mid \theta',\xi)\right] \right].
\end{align}
As we have stated in the previous section, we assume throughout this paper that the likelihood function can be computed explicitly with unit cost for any pair of inputs. Here we give some examples for which such an explicit computation of the likelihood function is possible.

\begin{example}[Additive noise]
Let us consider a forward model given by
\[ f_{\xi}(\theta,\epsilon) = g_{\xi}(\theta)+\epsilon, \]
for a uni-vatiate function $g_{\xi}: \Theta\to \Ycal\, (=\RR^t)$ and $\epsilon\sim N(0,\Sigma)$ with a covariance matrix $\Sigma$ and $s'=t$. Then, denoting the density of $\epsilon$ by $\varphi$, we have
\[ \rho(f_{\xi}(\theta,\epsilon)\mid \theta',\xi) = \varphi(\epsilon+g_{\xi}(\theta)-g_{\xi}(\theta')),\]
with a special case $\rho(f_{\xi}(\theta,\epsilon)\mid \theta,\xi) = \varphi(\epsilon)$.
\end{example}

\begin{example}[Multiplicative noise]
Let $s'=t=1$ for simplicity, and consider a forward model given by
\[ f_{\xi}(\theta,\epsilon) = g_{\xi}(\theta)\times (1+\epsilon), \]
with $g_{\xi}: \RR^s\to \RR_{>0}$ and $\epsilon\sim N(0,\sigma^2)$. Denoting the density of $\epsilon$ by $\varphi$, we have
\[ \rho(f_{\xi}(\theta,\epsilon)\mid \theta',\xi) = \varphi\left(\frac{g_{\xi}(\theta)}{g_{\xi}(\theta')}(1+\epsilon)-1\right),\]
with a special case $\rho(f_{\xi}(\theta,\epsilon)\mid \theta,\xi) = \varphi(\epsilon)$.
\end{example}

\begin{example}[Mixture of additive and multiplicative noises]\label{exm:mix}
Finally, for $t=1$ and $s'=2$, i.e., $\epsilon=(\epsilon_1,\epsilon_2)\in \RR^2$, let us consider a forward model described by
\[ f_{\xi}(\theta,\epsilon) = g_{\xi}(\theta)\times (1+\epsilon_1)+\epsilon_2, \]
with $g_{\xi}: \RR^s\to \RR_{>0}$, $\epsilon_1\sim N(0,\sigma_1^2)$ and $\epsilon_2\sim N(0,\sigma_2^2)$. Denoting the density of the standard normal random variable by $\varphi$, we have
\[ \rho(f_{\xi}(\theta,\epsilon)\mid \theta',\xi) = \varphi\left(\frac{g_{\xi}(\theta)\times (1+\epsilon_1)+\epsilon_2-g_{\xi}(\theta')}{\sqrt{|g_{\xi}(\theta')|^2\sigma_1^2+\sigma_2^2}}\right),\]
with a special case
\[\rho(f_{\xi}(\theta,\epsilon)\mid \theta,\xi) = \varphi\left(\frac{\epsilon_1 g_{\xi}(\theta)+\epsilon_2}{\sqrt{|g_{\xi}(\theta)|^2\sigma_1^2+\sigma_2^2}}\right). \]
\end{example}

Now we are ready to derive the gradient $\nabla_\xi U$. Note that our claim does not assume that the noise $\epsilon$ is additive and a Gaussian random variable, as discussed in the last two examples.

\begin{proposition}\label{prop:gradient}
Let $\theta'$ be an i.i.d.\ copy of $\theta$. Assume that the likelihood functions $\rho(f_{\xi}(\theta,\epsilon)\mid \theta,\xi)$ and  $\rho(f_{\xi}(\theta,\epsilon)\mid \theta',\xi)$ and their gradients $\nabla_{\xi}\rho(f_{\xi}(\theta,\epsilon)\mid \theta,\xi)$ and $\nabla_{\xi}\rho(f_{\xi}(\theta,\epsilon)\mid \theta',\xi)$ are all continuous with respect to $\theta, \theta',\epsilon$ and $\xi$. Then we have
\[ \nabla_\xi U(\xi) =\EE_{\theta, \epsilon}\left[\frac{\nabla_{\xi}\rho(f_{\xi}(\theta,\epsilon)\mid \theta,\xi)}{\rho(f_{\xi}(\theta,\epsilon)\mid \theta,\xi)}- \frac{\EE_{\theta'}\left[\nabla_{\xi} \rho(f_{\xi}(\theta,\epsilon)\mid \theta',\xi)\right]}{\EE_{\theta'}\left[ \rho(f_{\xi}(\theta,\epsilon)\mid \theta',\xi)\right]} \right]. \]
\end{proposition}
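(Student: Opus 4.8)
The plan is to differentiate the two-term expression \eqref{eq:eig2} for $U(\xi)$ under the integral sign, justifying the interchange of $\nabla_\xi$ and $\EE_{\theta,\epsilon}$ (and, for the second term, also $\EE_{\theta'}$) via a dominated-convergence / differentiation-under-the-integral argument built on the stated continuity hypotheses. First I would treat the first term: writing $h_1(\xi) := \log \rho(f_\xi(\theta,\epsilon)\mid\theta,\xi)$, the chain rule gives $\nabla_\xi h_1 = \nabla_\xi\rho(f_\xi(\theta,\epsilon)\mid\theta,\xi)/\rho(f_\xi(\theta,\epsilon)\mid\theta,\xi)$, which is well-defined because the likelihood is assumed strictly positive; then one asserts $\nabla_\xi \EE_{\theta,\epsilon}[h_1] = \EE_{\theta,\epsilon}[\nabla_\xi h_1]$. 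For the second term, set $G(\xi) := \EE_{\theta'}[\rho(f_\xi(\theta,\epsilon)\mid\theta',\xi)]$; I would first justify $\nabla_\xi G(\xi) = \EE_{\theta'}[\nabla_\xi\rho(f_\xi(\theta,\epsilon)\mid\theta',\xi)]$, then apply the chain rule to $\log G(\xi)$ to get $\nabla_\xi \log G = \nabla_\xi G / G$, and finally pull $\nabla_\xi$ through the outer $\EE_{\theta,\epsilon}$. Subtracting the two contributions yields exactly the claimed formula.

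The key steps, in order, are: (i) record that $\rho>0$ everywhere, so all the logarithms and quotients are finite and the chain-rule manipulations are valid pointwise; (ii) invoke a standard Leibniz-type theorem — continuity of the integrand and its $\xi$-gradient in all variables, plus a local domination of the gradient by an integrable envelope on a neighbourhood of each $\xi\in\Xcal$ (using that $\Xcal$ is open) — to move $\nabla_\xi$ inside $\EE_{\theta'}$ for $G$; (iii) repeat (ii) to move $\nabla_\xi$ inside $\EE_{\theta,\epsilon}$ for both terms, again via local domination of $\nabla_\xi h_1$ and $\nabla_\xi \log G$; (iv) assemble the pieces.

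I expect the main obstacle to be step (ii)–(iii): the continuity hypotheses as literally stated do not by themselves supply the integrable dominating function that the differentiation-under-the-integral theorem requires, so some additional (mild, and arguably implicit) integrability/uniform-boundedness assumption near each $\xi$ is needed to make the interchange rigorous — in particular, one wants $\sup_{\xi'\in B}|\nabla_{\xi'}\rho(f_{\xi'}(\theta,\epsilon)\mid\theta',\xi')|$ and $\sup_{\xi'\in B}|\nabla_{\xi'}\rho(f_{\xi'}(\theta,\epsilon)\mid\theta,\xi')|/\rho(f_{\xi'}(\theta,\epsilon)\mid\theta,\xi')$ to be integrable over a small ball $B$. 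A secondary, easier point is that $G(\xi)$ stays bounded away from $0$ locally in $\xi$ (so that $1/G$ is locally bounded), which follows from strict positivity of the integrand together with continuity. Once those domination conditions are granted, the remainder is the routine chain-rule bookkeeping sketched above.
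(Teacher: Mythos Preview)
Your proposal is correct and follows essentially the same route as the paper: invoke the Leibniz integral rule to push $\nabla_\xi$ through the outer expectation in \eqref{eq:eig2}, apply the chain rule to each $\log$, and then push $\nabla_\xi$ through the inner expectation over $\theta'$. Your discussion is in fact more careful than the paper's three-line proof, which simply asserts that ``the Leibniz integral rule applies'' under the continuity assumption without addressing the domination issue you flag.
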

\begin{proof}
Under the continuity assumption on the likelihood function, the Leibniz integral rule applies and we have
\begin{align*}
\nabla_{\xi}U(\xi)& = \EE_{\theta, \epsilon}\left[ \nabla_{\xi} \log \rho(f_{\xi}(\theta,\epsilon)\mid \theta,\xi)\right]- \EE_{\theta, \epsilon}\left[ \nabla_{\xi}\log \EE_{\theta'}\left[ \rho(f_{\xi}(\theta,\epsilon)\mid \theta',\xi)\right] \right] \\
& = \EE_{\theta, \epsilon}\left[ \frac{\nabla_{\xi}\rho(f_{\xi}(\theta,\epsilon)\mid \theta,\xi)}{\rho(f_{\xi}(\theta,\epsilon)\mid \theta,\xi)}\right]- \EE_{\theta, \epsilon}\left[\frac{\nabla_{\xi}\EE_{\theta'}\left[ \rho(f_{\xi}(\theta,\epsilon)\mid \theta',\xi)\right]}{\EE_{\theta'}\left[ \rho(f_{\xi}(\theta,\epsilon)\mid \theta',\xi)\right]} \right] \\
& = \EE_{\theta, \epsilon}\left[ \frac{\nabla_{\xi}\rho(f_{\xi}(\theta,\epsilon)\mid \theta,\xi)}{\rho(f_{\xi}(\theta,\epsilon)\mid \theta,\xi)}\right]- \EE_{\theta, \epsilon}\left[\frac{\EE_{\theta'}\left[\nabla_{\xi} \rho(f_{\xi}(\theta,\epsilon)\mid \theta',\xi)\right]}{\EE_{\theta'}\left[ \rho(f_{\xi}(\theta,\epsilon)\mid \theta',\xi)\right]} \right].
\end{align*}
\end{proof}

As is clear from this proposition, because of the ratio of inner expectations, the gradient $\nabla_\xi U$ is inherently given by a nested expectation with an inner expectation with respect to $\theta'$ and an outer expectation with respect to $\theta$ and $\epsilon$. 

\subsection{Basics of stochastic gradient-based optimization}\label{subsec:gradient}
We recall that the aim of Bayesian experimental designs is to find an optimal experimental setup $\xi=\xi^*$ which satisfies
\[ \xi^* = \arg\max_{\xi\in \Xcal}U(\xi), \]
where we recall that an open set $\Xcal\subset \RR^{d}$ denotes the feasible domain of $\xi$. To achieve this goal, one of the reasonable approaches is to use some gradient-based optimization methods in which we set an initial experimental setup $\xi_0\in \Xcal$ and recursively update itself as
\[ \xi_{t+1}=g_t(\xi_t, \nabla_\xi U(\xi_t))\quad \text{for $t=0,1,\ldots$,}\]
until a certain stopping criterion is met. However, computing $\nabla_\xi U$ is already challenging since it is given by a nested expectation. As inferred from the results shown in the next section, it is possible to construct an antithetic MLMC estimator which efficiently estimates $\nabla_\xi U$, but we avoid such a ``pointwise'' accurate gradient estimation by using \emph{stochastic} gradient-based optimization methods. What we need here is an unbiased estimator of $\nabla_\xi U$ with finite variance and computational cost.

To simplify the presentation, let us define a vector of random variables
\begin{align}\label{eq:psi} \psi_\xi :=  \frac{\nabla_{\xi}\rho(f_{\xi}(\theta,\epsilon)\mid \theta,\xi)}{\rho(f_{\xi}(\theta,\epsilon)\mid \theta,\xi)}- \frac{\EE_{\theta'}\left[\nabla_{\xi} \rho(f_{\xi}(\theta,\epsilon)\mid \theta',\xi)\right]}{\EE_{\theta'}\left[ \rho(f_{\xi}(\theta,\epsilon)\mid \theta',\xi)\right]}, \end{align}
with $\theta\sim \pi_0$ and $\epsilon\sim \varphi$ being the underlying stochastic variables. It follows from Proposition~\ref{prop:gradient} that $\EE[\psi_\xi]=\nabla_\xi U(\xi)$. Suppose at this moment that we are able to generate i.i.d.\ random samples of $\psi_\xi$. We emphasize that random sampling of $\psi_\xi$ is far from trivial but we shall show in the next section that this is indeed possible. 

In stochastic gradient-based optimization methods, after setting an initial experimental setup $\xi_0\in \Xcal$, we recursively update itself as
\[ \xi_{t+1}=g_t(\xi_t, \psi_{\xi_t})\quad \text{for $t=0,1,\ldots$,}\]
or more generally,
\[ \xi_{t+1}=g_t\left(\xi_t, \frac{1}{N}\sum_{n=1}^{N}\psi_{\xi_t}^{(n)} \right)\quad \text{for $t=0,1,\ldots$,}\]
where $\psi_{\xi_t}^{(1)},\ldots,\psi_{\xi_t}^{(N)}$ are i.i.d.\ realizations of $\psi_{\xi_t}$ for a sample size $N\in \ZZ_{>0}$. This means that, at each iteration, we only need (rough) unbiased Monte Carlo estimate of $\EE[\psi_\xi]$ instead of the true value. There have been many examples for this recursion $g_t$ proposed in the literature.

For instance, one of the most classical methods due to Robbins and Monro \cite{RM51} is simply given by
\[ \xi_{t+1}=\Pi_{\Xcal}\left(\xi_t+a_t \cdot \frac{1}{N}\sum_{n=1}^{N}\psi_{\xi_t}^{(n)}\right) , \]
with a sequence of non-negative reals called \emph{learning rates} $a_0,a_1,\ldots$ such that
\[ \sum_{t=0}^{\infty}a_t=\infty\quad \text{and}\quad \sum_{t=0}^{\infty}a_t^2<\infty, \]
where $\Pi_{\Xcal}$ denotes the projection operator which maps the input to a closest point in $\Xcal$, i.e., $\Pi_{\Xcal}(\xi')=\arg\min_{\xi\in \Xcal}\|\xi-\xi'\|$ with $\|\cdot\|$ being the Euclidean norm of vector.\footnote{Note that most of the textbooks on stochastic algorithms such as \cite{AGbook,SDRbook} consider minimization problems for which the update rule should be replaced by $$\xi_{t+1}=\Pi_{\Xcal}\left(\xi_t-a_t \cdot \frac{1}{N}\sum_{n=1}^{N}\psi_{\xi_t}^{(n)}\right),$$ and the objective function is often assumed to be convex instead of concave.} As described in \cite[Chapter~5.9]{SDRbook}, for instance, if $\Xcal$ is convex, $U$ is strongly concave and differentiable with respect to $\xi$, and $\EE\left[ \|\psi_{\xi}\|_2^2\right]<\infty$ for any $\xi\in \Xcal$, then the estimate $\xi_t$ converges to the optimal $\xi^*$ with the mean squared error of $O(1/t)$.

There have been many variants of the classical Robbins-Monro algorithm proposed in the literature, notably such as Polyak-Ruppert averaging \cite{P90,R91} and stochastic counterpart of Nesterov's acceleration \cite{N83}. More recently, the idea of using not only the first moment of the gradient estimate but also its second moment to set the learning rates for individual design parameters in $\xi$ adaptively has been explored insensitively, especially in the machine learning community, see \cite{DHS11,TH12,KB15,RKK18}.

\section{Monte Carlo gradient estimation}\label{sec:mlmc}
Here we introduce two Monte Carlo estimators of the gradient $\nabla_\xi U(\xi)=\EE[\psi_{\xi}]$. Subsequently we propose an algorithm to efficiently search for optimal Bayesian experimental designs.

\subsection{Standard Monte Carlo}
The standard Monte Carlo method is one of the easiest and the most straightforward methods to approximate $\psi_\xi$. Let us estimate two expectations with respect to $\theta'$ by the Monte Carlo averages using common random samples of $\theta'$, respectively. Namely, for randomly chosen $\theta$ and $\epsilon$, let
\[ \psi_{\xi,M} :=  \frac{\nabla_{\xi}\rho(f_{\xi}(\theta,\epsilon)\mid \theta,\xi)}{\rho(f_{\xi}(\theta,\epsilon)\mid \theta,\xi)}- \frac{\nabla \varrho_{\xi,M}(\theta,\epsilon)}{\varrho_{\xi,M}(\theta,\epsilon)}, \]
with
\begin{align*}
    \varrho_{\xi,M}(\theta,\epsilon) & = \frac{1}{M}\sum_{m=1}^{M}\rho(f_{\xi}(\theta,\epsilon)\mid \theta'^{(m)},\xi), \\
    \nabla \varrho_{\xi,M}(\theta,\epsilon) & = \frac{1}{M}\sum_{m=1}^{M}\nabla_{\xi}\rho(f_{\xi}(\theta,\epsilon)\mid \theta'^{(m)},\xi),
\end{align*}
where $\theta'^{(1)},\ldots,\theta'^{(M)}$ are independent samples from the prior distribution $\pi_0$. More generally, for an importance distribution $q$ which may depend on the value of $f_{\xi}(\theta,\epsilon)$ or the outer random variables $\theta$ and $\epsilon$, we can consider
\begin{align}\label{eq:mc_variable} 
\psi_{\xi,M,q} :=  \frac{\nabla_{\xi}\rho(f_{\xi}(\theta,\epsilon)\mid \theta,\xi)}{\rho(f_{\xi}(\theta,\epsilon)\mid \theta,\xi)}- \frac{\nabla \varrho_{\xi,M,q}(\theta,\epsilon)}{\varrho_{\xi,M,q}(\theta,\epsilon)}, 
\end{align}
with
\begin{align*}
    \varrho_{\xi,M,q}(\theta,\epsilon) & = \frac{1}{M}\sum_{m=1}^{M}\frac{\rho(f_{\xi}(\theta,\epsilon)\mid \theta'^{(m)},\xi)\pi_0(\theta'^{(m)})}{q(\theta'^{(m)})},\\
    \nabla \varrho_{\xi,M,q}(\theta,\epsilon) & = \frac{1}{M}\sum_{m=1}^{M}\frac{\nabla_{\xi}\rho(f_{\xi}(\theta,\epsilon)\mid \theta'^{(m)},\xi)\pi_0(\theta'^{(m)})}{q(\theta'^{(m)})}, 
\end{align*}
where $\theta'^{(1)},\ldots,\theta'^{(M)}$ are independent samples from the distribution $q$. 

Although it holds from the linearity of expectation that 
\begin{align*}
    \EE\left[ \varrho_{\xi,M,q}(\theta,\epsilon) \mid \theta,\epsilon\right] & =\EE_{\theta'}\left[ \rho(f_{\xi}(\theta,\epsilon)\mid \theta',\xi)\right], \\
    \EE\left[ \nabla \varrho_{\xi,M,q}(\theta,\epsilon)\mid \theta,\epsilon\right] & =\EE_{\theta'}\left[ \nabla_{\xi}\rho(f_{\xi}(\theta,\epsilon)\mid \theta',\xi)\right],
\end{align*}
for any $M$, i.e., both the denominator and the numerator themselves are estimated without any bias, respectively, taking the ratio between these two yields 
\[ \EE[\psi_{\xi,M}], \EE[\psi_{\xi,M,q}] \neq \EE[\psi_\xi]=\nabla_\xi U(\xi)\] 
unless $q=\pi^{f_{\xi}(\theta,\epsilon)\mid \xi}$. This means that neither $\psi_{\xi,M}$ nor $\psi_{\xi,M,q}$ is an unbiased estimator of the gradient $\nabla_\xi U(\xi)$.

\subsection{Unbiased multilevel Monte Carlo}
Here we introduce an unbiased multilevel Monte Carlo estimator by using the debiasing technique from \cite{RG15} which itself is an extension of the multilevel Monte Carlo method due to Giles \cite{G08,G15}. Let us consider an increasing sequence $0<M_0<M_1<\ldots$ such that $M_\ell\to \infty$ as $\ell\to \infty$. Then the strong law of large numbers ensures that
\[ \PP\left[\lim_{\ell\to \infty}\psi_{\xi,M_\ell,q}=\psi_\xi\right]=1, \]
see for instance \cite[Theorem~9.2]{mcbook}, and so the following telescoping sum holds:
\[ \nabla_\xi U(\xi)=\EE[\psi_\xi]=\lim_{\ell\to \infty}\EE[\psi_{\xi,M_\ell,q}] = \EE[\psi_{\xi,M_0,q}] + \sum_{\ell=1}^{\infty}\EE[\psi_{\xi,M_\ell,q}-\psi_{\xi,M_{\ell-1},q}]. \]
More generally, suppose at this moment that we have a sequence of \emph{correction} random variables $\Delta \psi_{\xi,0},\Delta \psi_{\xi,1},\ldots$ such that $\EE[\Delta \psi_{\xi,0}] = \EE[\psi_{\xi,M_0,q}]$ and
\[ \EE[\Delta \psi_{\xi,\ell}] =\EE[\psi_{\xi,M_\ell,q}-\psi_{\xi,M_{\ell-1},q}] \quad \text{for $\ell>0$.}\]
Then it holds that
\begin{align}\label{eq:telescope}
    \nabla_\xi U(\xi)=\EE[\psi_\xi]=\sum_{\ell=0}^{\infty}\EE[\Delta \psi_{\xi,\ell}].
\end{align}
For any sequence of positive reals $w_0,w_1,\ldots$ such that $w_0+w_1+\cdots=1$, the expectation of the random variable
\[ \frac{\Delta \psi_{\xi,\ell}}{w_\ell} \]
with the index $\ell\geq 0$ being selected randomly with probability $w_\ell$, is equal to the gradient $\nabla_\xi U(\xi)$. In fact, it is easy to see that
\[ \EE\left[ \frac{\Delta \psi_{\xi,\ell}}{w_\ell}\right] = \sum_{\ell=0}^{\infty}\frac{\EE[\Delta \psi_{\xi,\ell}]}{w_\ell}w_\ell= \sum_{\ell=0}^{\infty}\EE[\Delta \psi_{\xi,\ell}] = \nabla_\xi U(\xi). \]
Therefore, for any number of outer samples $N\in \ZZ_{>0}$, 
\[ \frac{1}{N}\sum_{n=1}^{N}\frac{\Delta \psi_{\xi,\ell^{(n)}}}{w_{\ell^{(n)}}} \]
with $\ell^{(1)},\ldots,\ell^{(N)}$ being independent and randomly chosen with probability $w_\ell$ is an unbiased Monte Carlo estimator of $\nabla_\xi U(\xi)$.

Let $C_\ell$ denote the expected cost of computing $\Delta \psi_{\xi,\ell}$, which is proportional to $M_{\ell}$. In order for the random variable $\Delta \psi_{\xi,\ell}/w_\ell$ to have finite expected squared $\ell_2$-norm and finite expected computational cost, we must have
\begin{align}\label{eq:condition} \sum_{\ell=0}^{\infty}\frac{\EE[\|\Delta \psi_{\xi,\ell}\|_2^2]}{w_\ell}<\infty \quad \text{and}\quad \sum_{\ell=0}^{\infty}C_\ell w_\ell<\infty. \end{align}
Thus construction of such correction variables $\Delta \psi_{\xi,\ell}$ in conjunction with an associated sequence $w_0,w_1,\ldots$, which has not been discussed yet, becomes a central issue. 

\subsubsection{Naive construction}
Throughout this paper let us consider a geometric progression $M_\ell=M_02^{\ell}$ for some $M_0\in \ZZ_{\geq 0}$. Although it is possible to change the base of the progression to a general integer $b\geq 2$, we restrict ourselves to the case $b=2$ for simplicity of exposition. 

Probably the most straightforward form of the correction variables $\Delta \psi_{\xi,0},\Delta \psi_{\xi,1},\ldots$ is $\Delta \psi_{\xi,0}=\psi_{\xi,M_0,q}$ and
\[ \Delta \psi_{\xi,\ell} = \psi_{\xi,M_0 2^{\ell},q}-\psi_{\xi,M_0 2^{\ell-1},q} ,\] 
for $\ell>0$, where both $\psi_{\xi,M_0 2^{\ell-1},q}$ and $\psi_{\xi,M_0 2^{\ell},q}$ are given as in \eqref{eq:mc_variable} with $M=M_02^{\ell-1}$ and $M=M_02^{\ell}$, respectively. Here, instead of using mutually independent $M_02^{\ell-1}$ and $M_02^{\ell}$ samples on $\theta'$ to compute $\psi_{\xi,M_0 2^{\ell-1},q}$ and $\psi_{\xi,M_0 2^{\ell},q}$, respectively, a subset with size $M_02^{\ell-1}$ of the $M_02^{\ell}$ samples on $\theta'$ used to compute $\psi_{\xi,M_0 2^\ell,q}$, can be reused to compute $\psi_{\xi,M_0 2^{\ell-1},q}$ by the linearity of expectation. By doing so, it is expected that $\EE[\|\Delta \psi_{\xi,\ell}\|_2^2]$ is much smaller in magnitude than $\EE[\|\psi_{\xi,M_0 2^{\ell},q}\|_2^2]$ (or $\EE[\|\psi_{\xi,M_0 2^{\ell-1},q}\|_2^2]$).

However, it seems not possible that the order of $\EE[\|\Delta \psi_{\xi,\ell}\|_2^2]$ is better than $O(2^{-\ell})$. Recalling that $C_{\ell}\propto M_{\ell}\propto 2^{\ell}$, a faster decay of $\EE[\|\Delta \psi_{\xi,\ell}\|_2^2]$ is required to find a sequence of positive reals $w_0,w_1,\ldots$ which satisfies the condition \eqref{eq:condition}. We conjecture that a lower bound on $\EE[\|\Delta \psi_{\xi,\ell}\|_2^2]$ of order $2^{-\ell}$ exists for this naive construction.

\subsubsection{Antithetic construction}
Motivated by the MLMC literature \cite{GS14,BHR15,GG19,GHI20,HGGT20}, we address this issue by considering the following \emph{antithetic coupling} in this paper. A key ingredient here is that we can take two disjoint subsets with equal size $M_02^{\ell-1}$ of the $M_02^{\ell}$ samples on $\theta'$ used to compute $\psi_{\xi,M_0 2^\ell,q}$, which results in two independent realizations of $\psi_{\xi,M_0 2^{\ell-1},q}$, denoted by $\psi_{\xi,M_0 2^{\ell-1},q}^{(a)}$ and $\psi_{\xi,M_0 2^{\ell-1},q}^{(b)}$, respectively. To be more precise, for the independent samples $\theta'^{(1)},\ldots,\theta'^{(M_02^{\ell})}$ generated from the distribution $q$, we write
\begin{align*}
    \psi_{\xi,M_02^{\ell},q} & =  \frac{\nabla_{\xi}\rho(f_{\xi}(\theta,\epsilon)\mid \theta,\xi)}{\rho(f_{\xi}(\theta,\epsilon)\mid \theta,\xi)}- \frac{\nabla \varrho_{\xi,M_02^{\ell},q}(\theta,\epsilon)}{\varrho_{\xi,M_02^{\ell},q}(\theta,\epsilon)},\\
    \psi_{\xi,M_02^{\ell-1},q}^{(a)} & =  \frac{\nabla_{\xi}\rho(f_{\xi}(\theta,\epsilon)\mid \theta,\xi)}{\rho(f_{\xi}(\theta,\epsilon)\mid \theta,\xi)}- \frac{\nabla \varrho_{\xi,M_02^{\ell-1},q}^{(a)}(\theta,\epsilon)}{\varrho_{\xi,M_02^{\ell-1},q}^{(a)}(\theta,\epsilon)},\quad \text{and}\\
    \psi_{\xi,M_02^{\ell-1},q}^{(b)} & =  \frac{\nabla_{\xi}\rho(f_{\xi}(\theta,\epsilon)\mid \theta,\xi)}{\rho(f_{\xi}(\theta,\epsilon)\mid \theta,\xi)}- \frac{\nabla \varrho_{\xi,M_02^{\ell-1},q}^{(b)}(\theta,\epsilon)}{\varrho_{\xi,M_02^{\ell-1},q}^{(b)}(\theta,\epsilon)},
\end{align*}
where, for the second term of each, we have defined
\begin{align*}
    \varrho_{\xi,M_02^{\ell},q}(\theta,\epsilon) & = \frac{1}{M_02^{\ell}}\sum_{m=1}^{M_02^{\ell}}\frac{\rho(f_{\xi}(\theta,\epsilon)\mid \theta'^{(m)},\xi)\pi_0(\theta'^{(m)})}{q(\theta'^{(m)})},\\
    \nabla \varrho_{\xi,M_02^{\ell},q}(\theta,\epsilon) & = \frac{1}{M_02^{\ell}}\sum_{m=1}^{M_02^{\ell}}\frac{\nabla_{\xi}\rho(f_{\xi}(\theta,\epsilon)\mid \theta'^{(m)},\xi)\pi_0(\theta'^{(m)})}{q(\theta'^{(m)})}, \\
    \varrho_{\xi,M_02^{\ell-1},q}^{(a)}(\theta,\epsilon) & = \frac{1}{M_02^{\ell-1}}\sum_{m=1}^{M_02^{\ell-1}}\frac{\rho(f_{\xi}(\theta,\epsilon)\mid \theta'^{(m)},\xi)\pi_0(\theta'^{(m)})}{q(\theta'^{(m)})},\\
    \nabla \varrho_{\xi,M_02^{\ell-1},q}^{(a)}(\theta,\epsilon) & = \frac{1}{M_02^{\ell-1}}\sum_{m=1}^{M_02^{\ell-1}}\frac{\nabla_{\xi}\rho(f_{\xi}(\theta,\epsilon)\mid \theta'^{(m)},\xi)\pi_0(\theta'^{(m)})}{q(\theta'^{(m)})}, \\
    \varrho_{\xi,M_02^{\ell-1},q}^{(b)}(\theta,\epsilon) & = \frac{1}{M_02^{\ell-1}}\sum_{m=M_02^{\ell-1}+1}^{M_02^{\ell}}\frac{\rho(f_{\xi}(\theta,\epsilon)\mid \theta'^{(m)},\xi)\pi_0(\theta'^{(m)})}{q(\theta'^{(m)})},\\
    \nabla \varrho_{\xi,M_02^{\ell-1},q}^{(b)}(\theta,\epsilon) & = \frac{1}{M_02^{\ell-1}}\sum_{m=M_02^{\ell-1}+1}^{M_02^{\ell}}\frac{\nabla_{\xi}\rho(f_{\xi}(\theta,\epsilon)\mid \theta'^{(m)},\xi)\pi_0(\theta'^{(m)})}{q(\theta'^{(m)})},
\end{align*}
respectively.

Now a sequence of the correction random variables $\Delta \psi_{\xi,0},\Delta \psi_{\xi,1},\ldots$ is defined by $\Delta \psi_{\xi,0}=\psi_{\xi,M_0,q}$ and
\begin{align}
\Delta \psi_{\xi,\ell} & = \psi_{\xi,M_0 2^\ell,q}-\frac{\psi_{\xi,M_0 2^{\ell-1},q}^{(a)}+\psi_{\xi,M_0 2^{\ell-1},q}^{(b)}}{2} \label{eq:mlmc_correction} \\
& = \frac{1}{2}\left( \frac{\nabla \varrho_{\xi,M_02^{\ell-1},q}^{(a)}(\theta,\epsilon)}{\varrho_{\xi,M_02^{\ell-1},q}^{(a)}(\theta,\epsilon)}+\frac{\nabla \varrho_{\xi,M_02^{\ell-1},q}^{(b)}(\theta,\epsilon)}{\varrho_{\xi,M_02^{\ell-1},q}^{(b)}(\theta,\epsilon)}\right) - \frac{\nabla \varrho_{\xi,M_02^{\ell},q}(\theta,\epsilon)}{\varrho_{\xi,M_02^{\ell},q}(\theta,\epsilon)},\notag
\end{align}
for $\ell>0$. The difference between this antithetic construction and the naive construction is that $\psi_{\xi,M_0 2^{\ell-1},q}$ has been replaced by the mean of $\psi_{\xi,M_0 2^{\ell-1},q}^{(a)}$ and $\psi_{\xi,M_0 2^{\ell-1},q}^{(b)}$. This means that each of the $M_02^{\ell}$ samples is used exactly twice in antithetic construction: once in $\psi_{\xi,M_0 2^\ell,q}$ and once in either $\psi_{\xi,M_0 2^{\ell-1},q}^{(a)}$ or $\psi_{\xi,M_0 2^{\ell-1},q}^{(b)}$. For this novel version of $\Delta \psi_{\xi,\ell}$, the linearity of expectation ensures
\begin{align*}
    \EE\left[\Delta \psi_{\xi,\ell}\right] & = \EE\left[\psi_{\xi,M_0 2^\ell,q}\right]-\frac{1}{2}\left(\EE\left[\psi_{\xi,M_0 2^{\ell-1},q}^{(a)}\right]+\EE\left[\psi_{\xi,M_0 2^{\ell-1},q}^{(b)}\right]\right) \\
    & = \EE\left[\psi_{\xi,M_0 2^\ell,q}\right]-\frac{1}{2}\left(\EE\left[\psi_{\xi,M_0 2^{\ell-1},q}\right]+\EE\left[\psi_{\xi,M_0 2^{\ell-1},q}\right]\right) \\
    & = \EE[\psi_{\xi,M_0 2^\ell,q}-\psi_{\xi,M_0 2^{\ell-1},q}],
\end{align*} 
so that it fits with the telescoping sum representation \eqref{eq:telescope} of the gradient $\nabla_\xi U(\xi)$. Despite the distinction being subtle, we will show that the antithetic construction has better properties than the naive construction. Hereafter, $\Delta \psi_{\xi,\ell}$ refers to the antithetic construction given in \eqref{eq:mlmc_correction}.

It is clear that the cost $C_\ell$ to compute $\Delta \psi_{\xi,\ell}$ is proportional to $2^{\ell}$, and also that the following \emph{antithetic} properties hold for $\Delta \psi_{\xi,\ell}$:
\begin{align}\label{eq:antithetic}
\begin{split}
\varrho_{\xi,M_02^{\ell},q}(\theta,\epsilon) & = \frac{1}{2}\left( \varrho_{\xi,M_02^{\ell-1},q}^{(a)}(\theta,\epsilon)+\varrho_{\xi,M_02^{\ell-1},q}^{(b)}(\theta,\epsilon)\right),\quad \text{and}\\
\nabla \varrho_{\xi,M_02^{\ell},q}(\theta,\epsilon) & = \frac{1}{2}\left( \nabla \varrho_{\xi,M_02^{\ell-1},q}^{(a)}(\theta,\epsilon)+\nabla \varrho_{\xi,M_02^{\ell-1},q}^{(b)}(\theta,\epsilon)\right) ,
\end{split}
\end{align}
which play a crucial role in showing that this antithetic construction achieves a faster decay rate of $\EE[\|\Delta \psi_{\xi,\ell}\|_2^2]$ than the naive construction, making it possible to find a sequence of positive reals $w_0,w_1,\ldots$ which satisfies the condition \eqref{eq:condition}. The following claim is the main theoretical result of this paper.
\begin{theorem}\label{thm:mlmc_conv}
Assume that 
\[ \sup_{\theta,\theta',\epsilon} \left\| \nabla_{\xi}\log \rho(f_{\xi}(\theta,\epsilon)\mid \theta',\xi)\right\|_{\infty} <\infty, \]
and that there exists $u> 2$ such that
\[ \EE_{\theta\sim \pi_0,\theta'\sim q,\epsilon} \left[\left| \frac{\rho(f_{\xi}(\theta,\epsilon)\mid \theta',\xi)\pi_0(\theta')}{ \rho(f_{\xi}(\theta,\epsilon)\mid \xi)q(\theta')}\right|^u \right] <\infty. \]
Then the following holds true:
\begin{enumerate}
\item For a fixed $\ell$, we have
\[ \EE[\|\Delta \psi_{\xi,\ell}\|_2^2]=O(2^{-\beta \ell}) \quad \text{with}\quad  \beta=\frac{\min\left(u,4\right)}{2}. \]
\item In order to have \eqref{eq:condition}, it suffices to choose $w_{\ell} \propto 2^{-\tau\ell}$ with $1<\tau<\beta$.
\end{enumerate}
\end{theorem}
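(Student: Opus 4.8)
The plan is to reduce the second claim to the first by a direct summability check, and then to spend the real effort on the variance bound in part (1). For part (2), assuming $\EE[\|\Delta\psi_{\xi,\ell}\|_2^2]=O(2^{-\beta\ell})$ and $C_\ell = O(2^{\ell})$, with the choice $w_\ell \propto 2^{-\tau\ell}$ the two series in \eqref{eq:condition} become $\sum_\ell 2^{-\beta\ell}/2^{-\tau\ell} = \sum_\ell 2^{-(\beta-\tau)\ell}$ and $\sum_\ell 2^{\ell} 2^{-\tau\ell} = \sum_\ell 2^{-(\tau-1)\ell}$, which converge precisely when $\tau<\beta$ and $\tau>1$ respectively; since the hypothesis $u>2$ gives $\beta = \min(u,4)/2 > 1$, the open interval $(1,\beta)$ is nonempty, so such a $\tau$ exists. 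This paragraph is short and routine.

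For part (1), the key is to exploit the antithetic identities \eqref{eq:antithetic}. Write $A := \varrho^{(a)}_{\xi,M_02^{\ell-1},q}(\theta,\epsilon)$, $B := \varrho^{(b)}_{\xi,M_02^{\ell-1},q}(\theta,\epsilon)$ and correspondingly $\nabla A, \nabla B$ for the gradient sums, so that $\varrho_{\xi,M_02^\ell,q} = (A+B)/2$ and $\nabla\varrho_{\xi,M_02^\ell,q} = (\nabla A + \nabla B)/2$. The first terms of $\psi_{\xi,M_02^\ell,q}$, $\psi^{(a)}$, $\psi^{(b)}$ all coincide and cancel in \eqref{eq:mlmc_correction}, so
\[ \Delta\psi_{\xi,\ell} = \frac12\left(\frac{\nabla A}{A} + \frac{\nabla B}{B}\right) - \frac{\nabla A + \nabla B}{A + B}. \]
Putting this over a common denominator, a short algebraic manipulation gives the crucial \emph{quadratic} cancellation: the numerator is $\tfrac12(A-B)(\nabla A \cdot B - \nabla B \cdot A)/\big(AB(A+B)\big)$ — more precisely one finds
\[ \Delta\psi_{\xi,\ell} = \frac{(B-A)\big(A\nabla B - B\nabla A\big)}{2AB(A+B)} = \frac{(A-B)}{2A}\cdot\frac{(\nabla A - \nabla B)}{(A+B)} \cdot\frac{\cdots}{\cdots}, \]
and the cleanest route is to write $\tfrac{\nabla A}{A}-\tfrac{\nabla B}{B} = \tfrac{B\nabla A - A\nabla B}{AB}$ and note $\tfrac{\nabla A+\nabla B}{A+B}$ lies between $\tfrac{\nabla A}{A}$ and $\tfrac{\nabla B}{B}$ componentwise when $A,B>0$, giving $\|\Delta\psi_{\xi,\ell}\|_2 \le \tfrac12\big\|\tfrac{\nabla A}{A}-\tfrac{\nabla B}{B}\big\|_2 \cdot \tfrac{|A-B|}{A+B}$. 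Writing $\tfrac{\nabla A}{A}-\tfrac{\nabla B}{B}$ in terms of $\nabla_\xi\log\rho$ values and using the first hypothesis (uniform boundedness of $\|\nabla_\xi\log\rho(f_\xi(\theta,\epsilon)\mid\theta',\xi)\|_\infty$ by some constant $K$), one bounds $\big\|\tfrac{\nabla A}{A}-\tfrac{\nabla B}{B}\big\|_2 \le 2K\sqrt d$ deterministically, leaving
\[ \|\Delta\psi_{\xi,\ell}\|_2^2 \;\lesssim\; \left(\frac{A-B}{A+B}\right)^2 \;=\; \left(\frac{A-B}{2\,\varrho_{\xi,M_02^\ell,q}}\right)^2. \]

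The main obstacle — and the part that needs genuine care — is then bounding $\EE[(A-B)^2/(A+B)^2]$ at the rate $O(2^{-\beta\ell})$. Here $A$ and $B$ are independent averages of $M_02^{\ell-1}$ i.i.d. importance-weighted likelihood terms, each with mean $\rho(f_\xi(\theta,\epsilon)\mid\xi)$ (conditionally on $\theta,\epsilon$), so $\EE[(A-B)^2\mid\theta,\epsilon]=\mathrm{Var}(A\mid\theta,\epsilon)+\mathrm{Var}(B\mid\theta,\epsilon) = O(2^{-\ell})\cdot v(\theta,\epsilon)$ where $v$ is the per-sample variance of the importance weight. The subtlety is the random denominator $A+B$, which can be small; this is exactly where the second hypothesis (finite $u$-th moment of the normalized importance weight, $u>2$) enters. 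I would use Hölder's inequality to split $\EE\big[(A-B)^2/(A+B)^2\big]$ into a product of $\EE[|A-B|^{p}]^{2/p}$-type factors (controlled by a Marcinkiewicz–Zygmund / Rosenthal inequality, which converts the $u$-th moment assumption into a decay of order $2^{-\ell\min(u,4)/4}$ after squaring — this is the source of the $\min(u,4)$) and an inverse-moment factor $\EE[(A+B)^{-r}]$ which is bounded uniformly in $\ell$ once one controls the lower tail of $A+B$ via its convergence to the positive constant $\rho(f_\xi(\theta,\epsilon)\mid\xi)$ (again using that the weight has a moment strictly above $1$, and a truncation argument splitting on whether $A+B$ is above or below half its limit). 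Taking the square root in the Marcinkiewicz–Zygmund estimate of $\EE[|A-B|^{\min(u,4)}]=O(2^{-\ell\min(u,4)/2})$ gives the stated $\beta = \min(u,4)/2$. Finally, one must also integrate over the outer variables $\theta\sim\pi_0$, $\epsilon\sim\varphi$ and check that the resulting constant is finite; this is where the $u$-th moment is needed in the form stated (expectation over $\theta\sim\pi_0$, $\theta'\sim q$), since it simultaneously controls $\EE_{\theta,\epsilon}[v(\theta,\epsilon)/\rho(\cdot\mid\xi)^2]$ and the inverse-moment term after interchanging the conditional and outer expectations. I would flag the inverse-moment control of $A+B$ — making the "denominator is rarely small" argument quantitative and uniform in $\ell$ — as the technically hardest step.
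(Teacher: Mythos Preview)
Your algebraic identity is correct and in fact cleaner than the paper's expansion: from the antithetic identities one gets exactly
\[
\Delta\psi_{\xi,\ell} \;=\; \frac{B-A}{2(A+B)}\left(\frac{\nabla A}{A}-\frac{\nabla B}{B}\right),
\]
and your observation that $\nabla A/A$ is a convex combination of the $\nabla_\xi\log\rho(\cdot\mid\theta'^{(m)},\xi)$, hence bounded in $\ell_\infty$ by $K$, is also right. Part~2 is fine.

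The gap is in what you do next. Once you replace $\big\|\tfrac{\nabla A}{A}-\tfrac{\nabla B}{B}\big\|_2$ by the constant $2K\sqrt d$, you are left with bounding $\EE\big[(A-B)^2/(A+B)^2\big]$, and this quantity is \emph{only} $O(2^{-\ell})$, never better. On the high-probability event where $A,B\in[\rho/2,3\rho/2]$ (with $\rho=\rho(f_\xi(\theta,\epsilon)\mid\xi)$), the ratio is bounded below by $(A-B)^2/(9\rho^2)$, whose conditional expectation is $2\,\mathrm{Var}(A\mid\theta,\epsilon)/(9\rho^2)=\Theta(2^{-\ell})$. No H\"older/Rosenthal manipulation can beat this, because $(\EE|A-B|^{p})^{2/p}\asymp 2^{-\ell}$ for every $p\ge 2$; higher moments of $A-B$ do not improve the \emph{second}-moment scale. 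Your stated rate ``$2^{-\ell\min(u,4)/4}$ after squaring'' does not arise from any valid H\"older split here. So your route yields $\beta=1$, not $\beta=\min(u,4)/2>1$, and the whole point of the antithetic construction is lost.

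The fix is precisely \emph{not} to bound $\tfrac{\nabla A}{A}-\tfrac{\nabla B}{B}$ by a constant: this factor is itself $O(2^{-\ell/2})$ in $L^2$ (both $\nabla A/A$ and $\nabla B/B$ converge to $\nabla_\xi\rho/\rho$), and it is the \emph{product of two small factors} that delivers $\beta>1$. The paper handles this by expanding $\Delta\psi_{\xi,\ell}$ around the limit $(\rho,\nabla_\xi\rho)$ into six terms, each a product of two deviations (or a fourth power of one), and controls the random denominators not by inverse moments but by splitting on the event $\{|A/\rho-1|>1/2\}\cup\{|B/\rho-1|>1/2\}$: on its complement the denominators satisfy $A,B,\tfrac{A+B}{2}\ge\rho/2$ deterministically, and on the event itself one uses your constant bound $\|\Delta\psi_{\xi,\ell}\|_2^2\le 8dK^2$ together with a tail bound $\PP[\text{event}]=O(2^{-u\ell/2})$ from the moment hypothesis. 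Your factorization would work equally well if you kept both factors and ran this same event-splitting argument; the inverse-moment route you sketch is both unnecessary and potentially ill-posed (nothing prevents $A+B$ from hitting zero).
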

\noindent We postpone the proof of the theorem to Appendix~\ref{app:proof}. 

\begin{remark}\label{rem:mlmc_conv}
It follows from the first item of Theorem~\ref{thm:mlmc_conv} that
\[ \EE[\|\Delta \psi_{\xi,\ell}\|_2]=O(2^{-\ell}), \]
for a fixed $\ell$. Using this property, the bias of the standard Monte Carlo estimator $\psi_{\xi,M_02^L,q}$ with $M=M_02^L$ inner samples is bounded as
\begin{align*}
\left\| \nabla_\xi U(\xi) - \EE[\psi_{\xi,M_02^L,q}]\right\|_2 & = \left\|\sum_{\ell=L+1}^{\infty}\EE[\Delta \psi_{\xi,\ell}]\right\|_2 \leq \sum_{\ell=L+1}^{\infty}\EE\left[\left\|\Delta \psi_{\xi,\ell}\right\|_2 \right] \\
& = O(2^{- L}) = O(M^{-1}).
\end{align*}
This means that, for small $M$, the standard Monte Carlo estimator may lead to a wrong trajectory of an experimental design in stochastic gradient-biased optimization and the resulting design will not be close to optimal.
\end{remark}

\subsection{Unbiased MLMC stochastic optimization}
Finally we arrive at our proposal of a stochastic algorithm to search for an optimal Bayesian experimental design $\xi^*\in \Xcal$ as summarized in Algorithm~\ref{alg:mlmc_sgd}. Here we note that Algorithm~\ref{alg:mlmc_sgd} assumes that the conditions appearing in Theorem~\ref{thm:mlmc_conv} hold for any $\xi\in \Xcal$ with a common value of $u$. Given additional assumptions that the domain $\Xcal$ is convex and that $U$ is strongly concave and differentiable with respect to $\xi$, most of the stochastic gradient-based optimization algorithms have a theoretical guarantee that $\xi_t$ converges to the optimal $\xi^*\in \Xcal$ with some decay rate, typically with the mean square error of $O(1/t)$ as mentioned in Section~\ref{subsec:gradient}.
\begin{algorithm}
\caption{Unbiased MLMC stochastic optimization}
\label{alg:mlmc_sgd} 
For a given $1<\tau <\beta$, set $w_0,w_1,\ldots>0$ such that $w_0+w_1+\cdots=1$ and $w_{\ell} \propto 2^{-\tau \ell}$. For the feasible set $\Xcal$, initialize $\xi_0\in \Xcal$ and $t=0$. For $N\in \ZZ_{>0}$, do the following:
\begin{enumerate}
\item Choose $\ell^{(1)},\ldots,\ell^{(N)}\in \ZZ_{\geq 0}$ independently and randomly with probability $w_\ell$.
\item Compute an unbiased MLMC estimate of the gradient $\nabla_{\xi}U$ at $\xi=\xi_t$:
\[ \frac{1}{N}\sum_{n=1}^{N}\frac{\Delta \psi_{\xi_t,\ell^{(n)}}}{w_{\ell^{(n)}}}. \]
\item Apply a stochastic gradient-based algorithm to get $\xi_{t+1}$:
\[ \xi_{t+1}=g_t\left(\xi_t, \frac{1}{N}\sum_{n=1}^{N}\frac{\Delta \psi_{\xi_t,\ell^{(n)}}}{w_{\ell^{(n)}}}  \right).\]
\item Check whether a certain stopping criterion is satisfied. If yes, stop the iteration. Otherwise, go to Step~1 with $t\leftarrow t+1$.
\end{enumerate}
\end{algorithm}

As in \cite{BDELT18,GHI20,CDELT20}, using Laplace approximation-based importance distribution for $q$ helps not only reduce the expected squared $\ell_2$-norm of the Monte Carlo gradient estimator but also avoid numerical instability coming from concentrated posterior measures of $\theta'$ given $f_{\xi}(\theta,\epsilon)$. We also refer to \cite{SSWxx} for some theoretical analyses on the Laplace approximation. 

\section{Numerical experiments}\label{sec:numerics}
Here, we conduct numerical experiments on two example problems in Bayesian experimental design. The first example is aimed at verifying our proposed algorithm by using a simple test problem. Then, in order to see practical performance of our algorithm, we consider a PK model used in \cite{RDTP14} for our second example. The Python code used in our experiments is available from \url{https://github.com/Goda-Research-Group/MLMC_stochastic_gradient}.

\subsection{Simple test case}
Let $\theta=(\theta_1,\theta_2)\in \RR_{>0}^2$ with $\theta_1,\theta_2\overset{\mathrm{iid}}{\sim} \mathrm{lognormal}(\mu,\sigma_0^2).$ For an experimental design $\xi\in \RR_{>0}$, let an observation $Y=(Y_1,Y_2)\in \RR_{>0}^2$ follow 
\begin{align*}
    Y_1\mid \theta,\xi & \sim \mathrm{lognormal}(g(\xi)\log \theta_1,\sigma_{\epsilon}^2),\\
    Y_2\mid \theta,\xi & \sim \mathrm{lognormal}(h(\xi)\log \theta_2,\sigma_{\epsilon}^2),
\end{align*}
for some functions $g$ and $h$. This is equivalent to consider the following forward model:
\begin{align*}
    Y_1 & = e^{g(\xi)\log \theta_1+\sigma_{\epsilon}\epsilon_1}, \\
    Y_2 & = e^{h(\xi)\log \theta_2+\sigma_{\epsilon}\epsilon_2},
\end{align*}
for $\epsilon_1,\epsilon_2\overset{\mathrm{iid}}{\sim} N(0,1)$ independently of $\theta$ and $\xi$, which is obviously a special case of \eqref{eq:forward}. The expected information gain for a given $\xi$ is analytically calculated as
\[ U(\xi) = \frac{1}{2}\log \left( \left(g(\xi)\right)^2\frac{\sigma_0^2}{\sigma_{\epsilon}^2}+1\right)\left( \left(h(\xi)\right)^2\frac{\sigma_0^2}{\sigma_{\epsilon}^2}+1\right). \]
Also, applying Jensen's inequality to \eqref{eq:eig2}, we see that $U(\xi)$ is bounded above by
\begin{align*}
    U(\xi)\leq \tilde{U}(\xi) & := \EE_{\theta, \epsilon}\left[ \log \rho(f_{\xi}(\theta,\epsilon)\mid \theta,\xi)\right]- \EE_{\theta, \theta',\epsilon}\left[ \log \rho(f_{\xi}(\theta,\epsilon)\mid \theta',\xi) \right] \\
    & \, = \left(g(\xi)\right)^2\frac{\sigma_0^2}{\sigma_{\epsilon}^2}+\left(h(\xi)\right)^2\frac{\sigma_0^2}{\sigma_{\epsilon}^2}.
\end{align*}
Here we note that the standard Monte Carlo gradient estimator with $M=1$ inner sample from the prior distribution, i.e., $\psi_{\xi,1}$, is nothing but an unbiased estimator of $\nabla_{\xi}\tilde{U}(\xi)$. Therefore, as long as 
\[ \xi^*=\arg\max_{\xi\in \RR_{>0}}U(\xi) \neq \arg\max_{\xi\in \RR_{>0}}\tilde{U}(\xi)\]
holds, stochastic gradient-based optimization based on $\psi_{\xi,1}$ will not converge to the optimal design $\xi^*$. In our experiments below, let $\mu=0$, $\sigma_0=\sigma_{\epsilon}=1$, 
\[ g(\xi)=e^{-\xi^2/2}\quad \text{and}\quad h(\xi)=\sqrt{\frac{3}{2}\left(1-e^{-\xi^2}\right)}.\]
Fig.~\ref{fig:test-eig} compares $U$ and $\tilde{U}$ as functions of $\xi$ for this setting. The optimal design which maximizes $U$ is given by $\xi^*=\sqrt{\log 3}\approx 1.048\ldots$ and we can observe that $U$ is concave around $\xi^*$. On the other hand, its upper bound $\tilde{U}$ is a strictly monotone increasing function and its supremum attains for $\xi\to \infty$.
\begin{figure}
    \centering
    \includegraphics[width=0.4\textwidth]{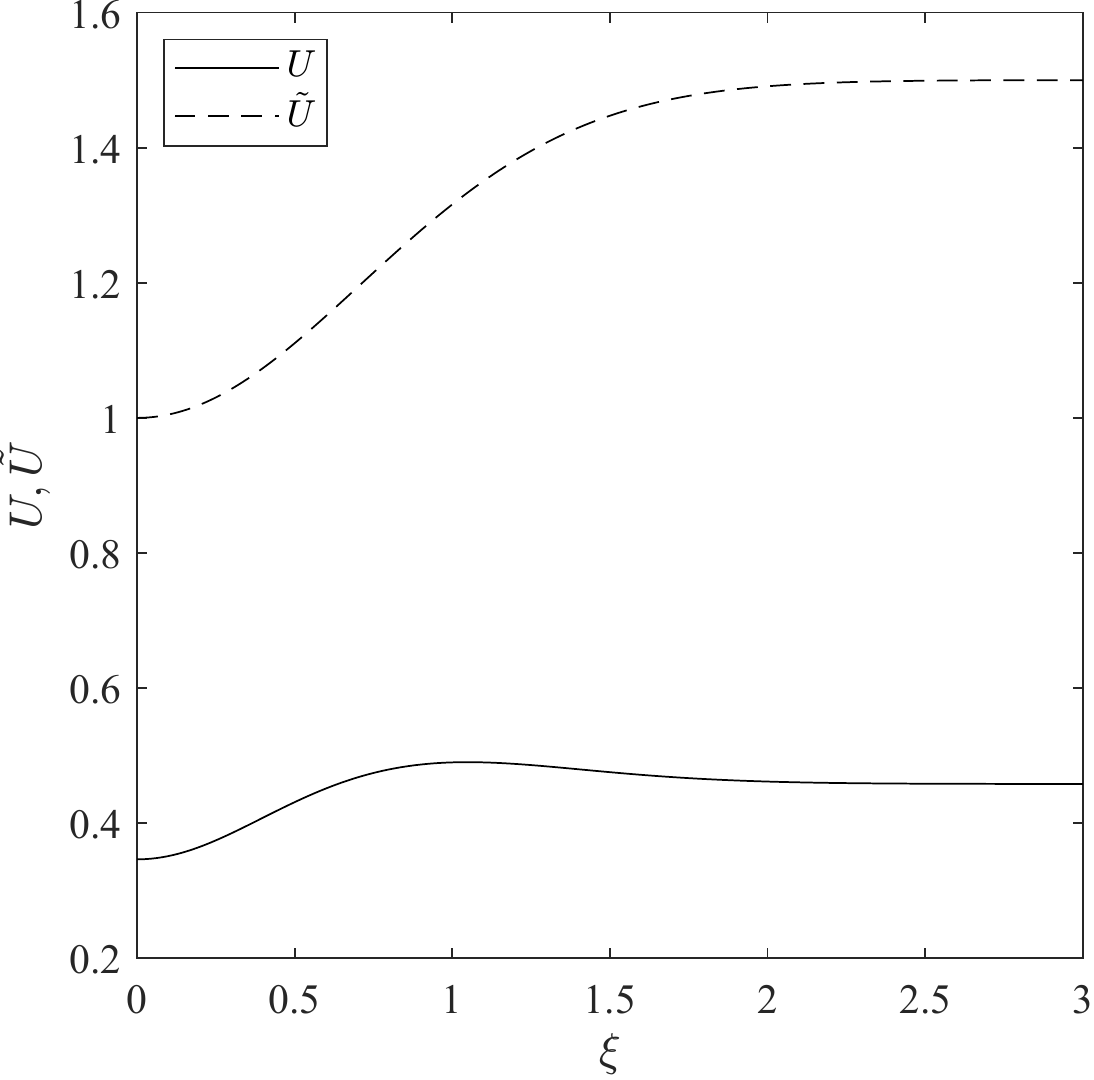}
    \caption{The expected information gain $U$ and its upper bound $\tilde{U}$ for the test case.}
    \label{fig:test-eig}
\end{figure}

Throughout this subsection, we do not use any importance sampling for the unbiased MLMC estimator of $\nabla_{\xi}U$ and set $M_0$, the number of level $0$ inner samples, to $1$. The left panel of Fig.~\ref{fig:test-mlmc_convergence} shows the convergence behavior of the MLMC correction variables $\Delta \psi_{\xi,\ell}$ at $\xi=1.5$. Here the mean squares (expected squared $\ell_2$-norms) of $\psi_{\xi,M_\ell}$ and $\Delta \psi_{\xi,\ell}$ are plotted on a $\log_2$ scale as functions of the level $\ell$, where the means are estimated empirically by using $10^5$ i.i.d.\ samples at each level. While the mean square of $\psi_{\xi,M_\ell}$ takes an almost constant value for $\ell> 4$, that of $\Delta \psi_{\xi,\ell}$ decreases geometrically as the level increases. The linear regression of the data for the range $1\leq \ell\leq 10$ provides an estimation of $\beta$ as $1.64$, which agrees well with the theoretical result in Theorem~\ref{thm:mlmc_conv}. As shown in the right panel of Fig.~\ref{fig:test-mlmc_convergence}, a similar convergence behavior of the MLMC correction variables $\Delta \psi_{\xi,\ell}$ can be observed at the optimal design $\xi=\xi^*=\sqrt{\log 3}$, where $\beta$ is estimated as $1.63$.

\begin{figure}
    \centering
    \includegraphics[width=0.4\textwidth]{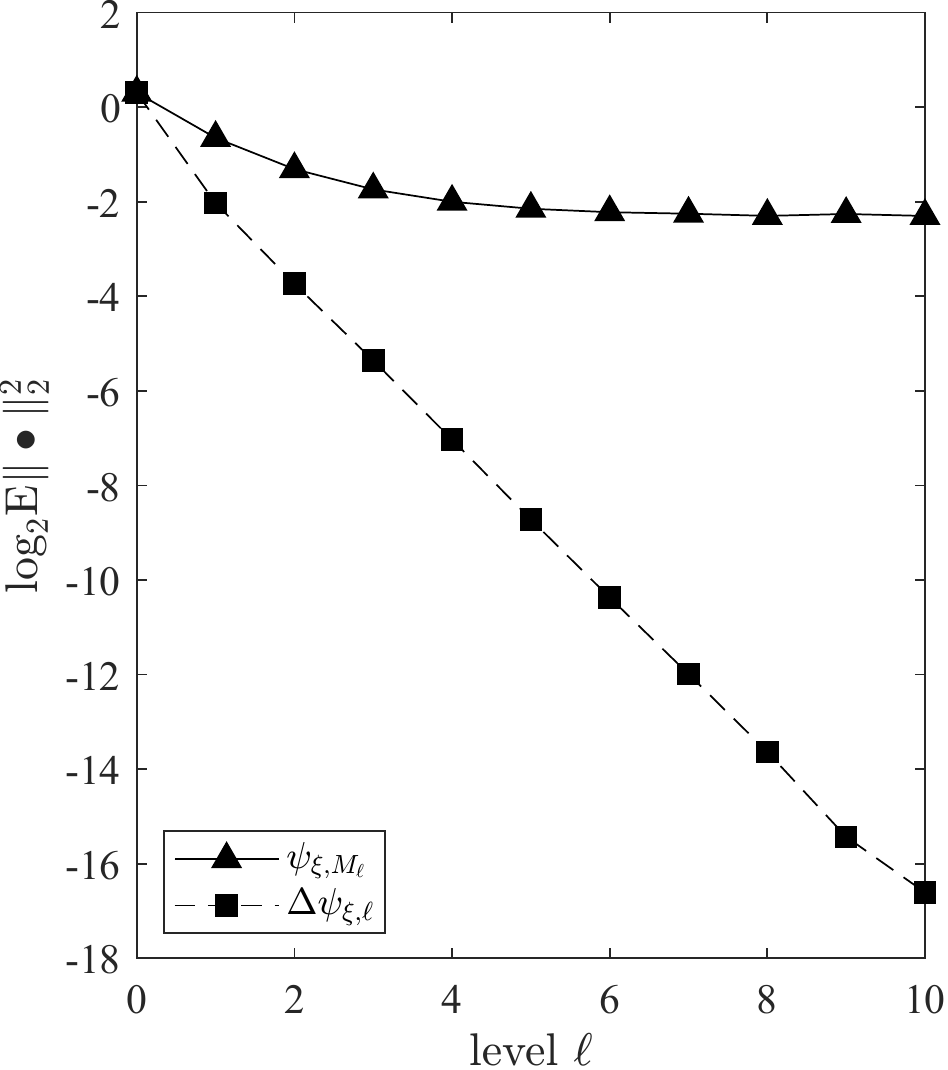}
    \includegraphics[width=0.4\textwidth]{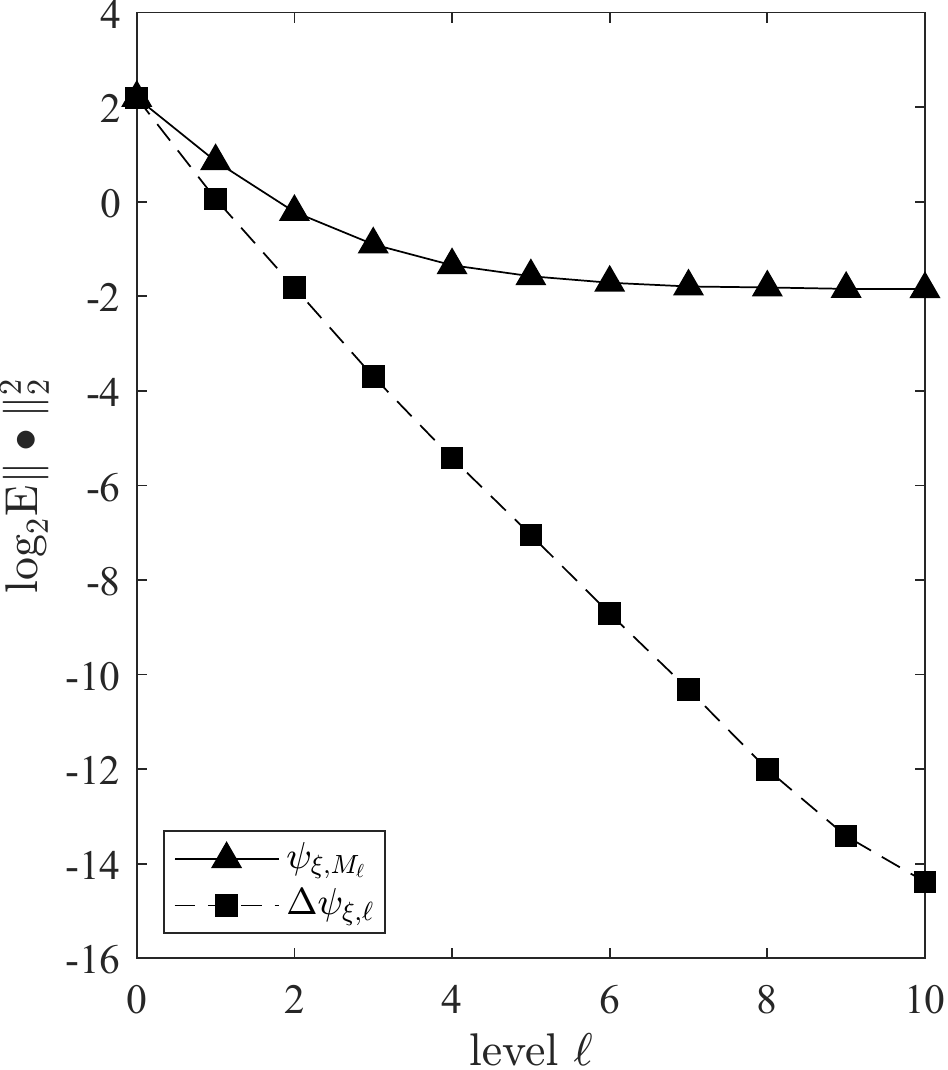}
    \caption{The mean squares of the variables $\psi_{\xi,M_\ell}$ and $\Delta \psi_{\xi,\ell}$ for the test case at $\xi=1.5$ (left) and at $\xi=\xi^*=\sqrt{\log 3}$ (right).}
    \label{fig:test-mlmc_convergence}
\end{figure}

Such a fast geometric decay of the correction variables $\Delta \psi_{\xi,\ell}$ justifies us to apply Algorithm~\ref{alg:mlmc_sgd} to search for the optimal design $\xi^*$. In order to randomly choose the level $\ell$, we set $\tau=1.5$ and $w_\ell = 2^{-3\ell/2}(1-2^{-3/2})$. This implies that the expected number of inner samples used in the MLMC estimator is given by
\[ \sum_{\ell=0}^{\infty}2^{\ell}w_{\ell} = (1-2^{-3/2})\sum_{\ell=0}^{\infty}2^{-\ell/2}=\frac{1-2^{-3/2}}{1-2^{-1/2}}\approx 2.21. \]
For comparison, we also consider the standard Monte Carlo estimators $\psi_{\xi,M}$ for the gradient of the expected information gain with various values of $M=1,2,4,\ldots,64$ within stochastic gradient descent. We fix the number of outer samples $N$ to $2000$ throughout all the iteration steps for all the estimators. We use the Robbins-Monro algorithm with Polyak-Ruppert averaging and the learning rates $\alpha_t=5/(t+1)$ as a stochastic descent algorithm, and as the computational cost is proportional to the number of inner samples, we set the maximum iteration steps $T$ to $\lfloor 10^7/M\rfloor$, the largest integer less than or equal to $10^7/M$. Although the number of inner samples is a random variable for the MLMC estimator, we simply set $T$ to $\lfloor 10^7/2.21\rfloor$. The initial design candidate at $t=0$ is given by $\xi_0=1.5$ and the feasible set $\Xcal$ is set to $\RR_{>0}$. Hence the maximum increment of the expected information gain is $U(\sqrt{\log 3})-U(1.5)\approx 0.0148$. For each gradient estimator, we conduct 10 independent runs and compute the average of the distance $\|\xi_t-\xi^*\|_2^2$ and its standard error for all the iteration steps, which correspond to the line and the shaded area of Fig.~\ref{fig:test-trajectory}, respectively.

Fig.~\ref{fig:test-trajectory} shows the convergence behaviors of the estimated experimental design $\xi_t$ for the considered estimators of the gradient $\nabla_{\xi}U$. Note here that the horizontal axis is given by $M\times t$ as a measure of the total computational cost (here again, we simply let $M=2.21$ for the MLMC estimator) and both axes use the logarithmic scales. As expected, the standard Monte Carlo estimator with $M=1$ leads to larger values of $\xi_t$ which make $\tilde{U}$ large, so that the search goes in wrong direction. Even for $M=2$, the situation is not improved so much and the experimental design $\xi_t$ remains almost the same throughout the iterations. For larger values of $M$, the standard Monte Carlo estimator works in the early stages, making the distance $\|\xi_t-\xi^*\|_2^2$ small. However, after some iteration steps, the estimate $\xi_t$ converges to some point away from the optimal $\xi^*$. Although it is natural that such bias can be reduced simply by increasing $M$, a proper choice of $M$ in practical applications is far from trivial since larger $M$ means a larger computational cost and the bias seems extremely hard to estimate in advance. This is exactly the point where the unbiased MLMC estimator can help. As the black line shows, the distance $\|\xi_t-\xi^*\|_2^2$ decreases consistently from the early stage and overtakes the standard Monte Carlo estimators with fixed $M$, leading to a better estimate of the optimal experimental design. The linear regression of the data for the whole range $0< \log_{10}(Mt)\leq 7$ shows that the estimate $\xi_t$ converges to $\xi^*$ with the mean squared error of order $t^{-1.12}$ approximately, which is almost consistent with the standard stochastic optimization theory \cite[Chapter~5.9]{SDRbook}. A slightly faster decay of the standard Monte Carlo estimators with $M\geq 8$ in the early stages could be because that they estimate the gradients of biased objective functions which are steeper than the gradient of $U(\xi)$ around the initial estimate $\xi_0=1.5$ in this case.

\begin{figure}
    \centering
    \includegraphics[width=0.8\textwidth]{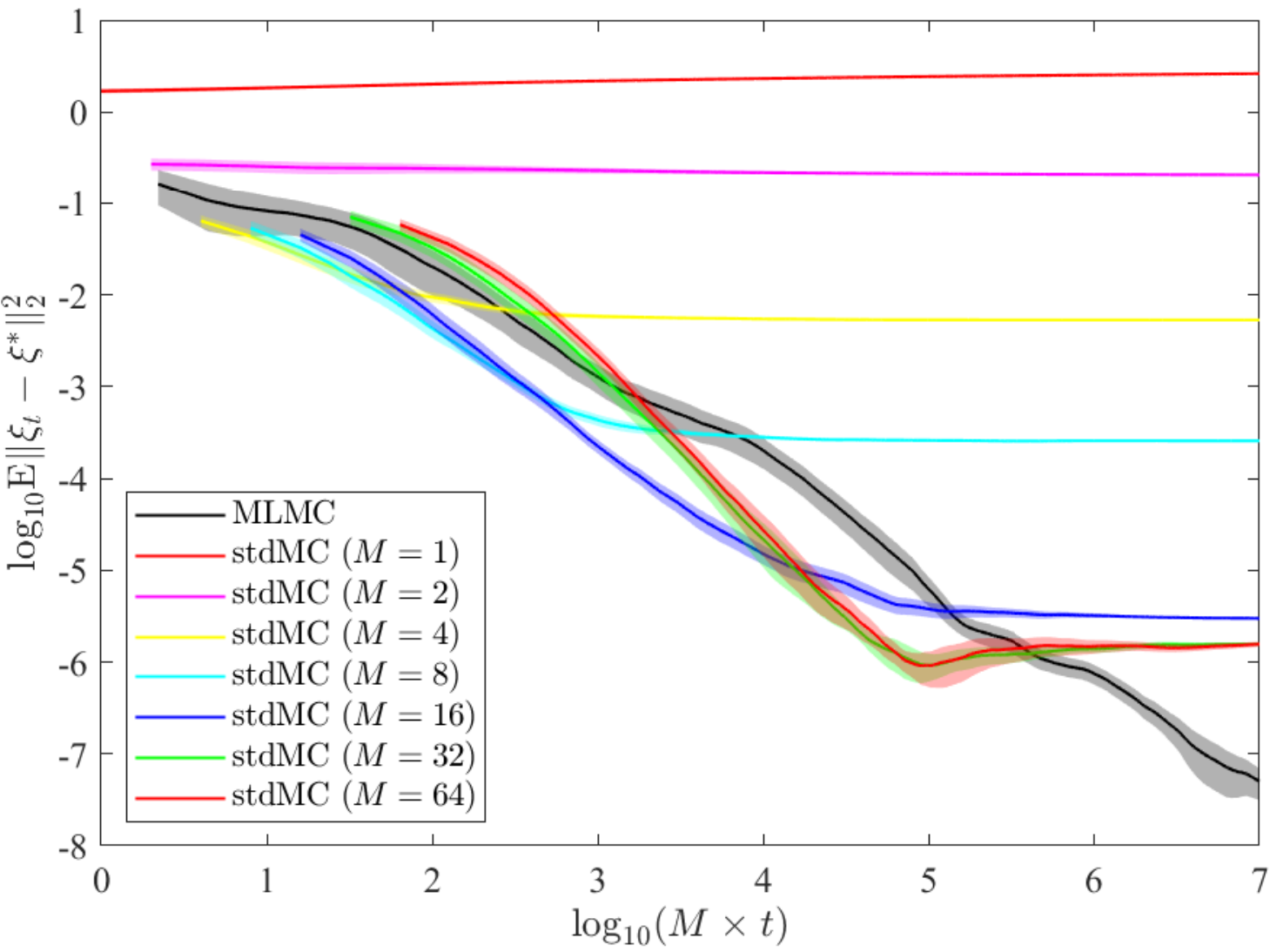}
    \caption{The convergence of the estimated experimental design $\xi_t$ to the optimal $\xi^*$ for various Monte Carlo estimators of the gradient $\nabla_{\xi}U$. For each estimator, the line and the shaded area represent the average and its standard error estimated from 10 independent runs, respectively.}
    \label{fig:test-trajectory}
\end{figure}

\subsection{Pharmacokinetic model}
Let us consider a PK design problem introduced in \cite{RDTP14}. Suppose that a drug with a fixed dose $D=400$ is administrated to subjects at time $\Tcal=0$. In order to reduce the uncertainty about a set of PK parameters, which affect the absorption, distribution and the elimination of the drug in the subjects’ body, it would be helpful to take blood samples of the subjects at several different times and to measure the concentration of the drug in the samples. Blood samples are assumed to be taken 15 times at $\Tcal=\xi^{(1)}, \ldots, \xi^{(15)}$ hours after the drug administration. Given the set of 15 drug concentration measurements, it is expected that the uncertainty of PK parameters of interest $\theta$ can be reduced. Our objective here is to optimize sampling times $\xi=(\xi^{(1)}, \ldots, \xi^{(15)})\in \RR_{\geq 0}^{15}$ such that the expected information gain brought from blood sampling is maximized. 

Let $\theta=(\log k_a, \log k_e,  \log V)\in \RR^3$ where $k_a$ represents the first-order absorption rate constant, $k_e$ does the first-order elimination rate constant and $V$ does the volume of distribution. Following \cite{RDTP14}, assume that the drug concentration of blood sample taken at time $\Tcal\geq 0$ is described as
\[ Y_{\Tcal} = \frac{Dk_a}{V(k_a-k_e)}\left( e^{-k_e \Tcal}-e^{-k_a \Tcal}\right)\left(1+\epsilon_1\right) +\epsilon_2=: g_{\Tcal}(\theta,\epsilon),  \]
with $\epsilon=(\epsilon_1,\epsilon_2)$, where $\epsilon_1$ and $\epsilon_2$ represent the multiplicative and additive Gaussian noises, respectively. Then our forward model is given by
\[ Y = (Y_{\xi^{(1)}},\ldots,Y_{\xi^{(15)}}) =\left(g_{\xi^{(1)}}(\theta,\epsilon_{\xi^{(1)}}),\ldots, g_{\xi^{(15)}}(\theta,\epsilon_{\xi^{(15)}})\right) \in \RR^{15},\]
where $\epsilon_{\xi^{(1)}},\ldots,\epsilon_{\xi^{(15)}}$ are assumed mutually independent and follow the same bi-variate normal distribution 
\[ \epsilon_{\xi^{(j)}} \sim N\left( \begin{pmatrix} 0 \\ 0\end{pmatrix}, \begin{pmatrix} 0.01 & 0 \\ 0 & 0.1\end{pmatrix}\right).\] The input random variables in $\theta$ are assumed independent and the corresponding probability distributions are given by $\log k_a \sim N(0, 0.05),\log k_e \sim N(\log(0.1), 0.05)$ and $\log V \sim N(\log(20), 0.05)$, respectively. This means that the prior information entropy of $\theta$ is equal to $3\log(\sqrt{2\pi e \times 0.05})\approx -0.2368.$ Moreover, the likelihood function is given by the product of $\rho(g_{\xi^{(j)}}(\theta,\epsilon_{\xi^{(j)}})\mid \theta',\xi^{(j)})$
that can be computed explicitly by following Example~\ref{exm:mix}.

In this setting the posterior distribution of $\theta$ given $Y$ cannot be computed analytically. In order to reduce the expected squared $\ell_2$-norm of the unbiased MLMC estimator of the gradient $\nabla_{\xi}U$, we use Laplace approximation-based importance sampling. Since not only the additive noise but also the multiplicative noise are included in the forward model, we consider a simple modification of the original method in \cite{LSTW13} as follows. Let us write
\[ \overline{g_{\Tcal}}(\theta)=\frac{Dk_a}{V(k_a-k_e)}\left( e^{-k_e \Tcal}-e^{-k_a \Tcal}\right)\quad \text{and}\quad \overline{g_{\xi}}(\theta)=\left(\overline{g_{\xi^{(1)}}}(\theta),\ldots,\overline{g_{\xi^{(15)}}}(\theta) \right). \]
Then, for the data $Y$ generated conditionally on the known value of $\theta=\theta^*$ from  the forward model, we approximate the posterior distribution $\pi^{Y\mid \xi}(\theta)$ by a Gaussian distribution $N(\hat{\theta},\hat{\Sigma})$ with
\begin{align*}
    \hat{\theta} & = \theta^*-\left(J(\theta^*)^{\top}\Sigma_{\epsilon}^{-1}J(\theta^*)+H(\theta^*)^{\top}\Sigma_{\epsilon}^{-1}E-\nabla_{\theta}\nabla_{\theta}\log \pi_0(\theta^*) \right)^{-1}J(\theta^*)^{\top}\Sigma_{\epsilon}^{-1}E, \\
    \hat{\Sigma} & = \left( J(\hat{\theta})^{\top}\Sigma_{\epsilon}^{-1}J(\hat{\theta})-\nabla_{\theta}\nabla_{\theta}\log \pi_0(\hat{\theta})\right)^{-1}.
\end{align*}
Here $J$ and $H$ denote the Jacobian and Hessian of $-\overline{g_{\xi}}$, respectively, that is, $J(\theta)=-\nabla_{\theta}\overline{g_{\xi}}(\theta)$ and $H(\theta)=-\nabla_{\theta}\nabla_{\theta}\overline{g_{\xi}}(\theta)$. Also we write $E:=Y^{\top}-\overline{g_{\xi}}(\theta^*)^{\top}$ and
\[ \Sigma_{\epsilon} = \diag\left( 0.01\left(\overline{g_{\xi^{(1)}}}(\theta)\right)^2 +0.1,\ldots, 0.01\left(\overline{g_{\xi^{(15)}}}(\theta)\right)^2 +0.1\right). \]
We use this $N(\hat{\theta},\hat{\Sigma})$ as an importance distribution $q$. The only difference from the one in \cite{LSTW13} is that the matrix $\Sigma_{\epsilon}$ depends on the mean response $\overline{g_{\xi}}(\theta)$ due to the multiplicative noise in our setting. Although a first-order approximation argument similar to \cite{LSTW13} might be possible and lead to different forms of $\hat{\theta}$ and $\hat{\Sigma}$, such a detailed analysis on the Laplace approximation is beyond the scope of this paper.

In order to search for optimal design parameters $\xi=(\xi^{(1)}, \ldots, \xi^{(15)})$, we do not represent them by a smaller number of parameters as considered in \cite{RDTP14}, but instead we optimize them directly. We set a design at the initial iteration step $t=0$ to equi-spaced times $\xi_0=(1,2,\ldots,15)$. In Algorithm~\ref{alg:mlmc_sgd}, we fix $M_0=1$, set $w_0 = 0.9$ and $w_\ell \propto 2^{-3\ell/2}$ for $\ell\geq 1$ such that they are summed up to 1, and set the number of outer samples to $N=2000$ at each iteration step. This implies that the expected number of inner samples used in the MLMC estimator is given by 
\[ \sum_{\ell=0}^{\infty}2^{\ell}w_{\ell} = \frac{9}{10}+\frac{2^{3/2}-1}{10}\sum_{\ell=1}^{\infty}2^{-\ell/2}\approx 1.34. \]
We use the AMSGrad optimizer with constant learning rate $\alpha_t=0.004$ and exponential moving average parameters $\beta_1=0.9, \beta_2=0.999$ as a stochastic descent algorithm, and set the maximum iteration steps $T$ to $10000$ as a stopping criterion. The feasible domain $\Xcal$ is restricted to $[0, 24]^{15}$. For comparison, we also consider the standard (biased) Monte Carlo estimator for the gradient $\nabla_{\xi}U$ with a fixed number of inner samples $M=1$ and the Laplace approximation-based importance sampling within stochastic gradient descent. As expected from the numerical results shown in \cite{CDELT20}, the Laplace approximation-based importance sampling helps reduce the bias of the Monte Carlo estimator significantly even for $M=1$.

\begin{figure}
(a) stdMC
\begin{center}
\includegraphics[width=0.8\textwidth]{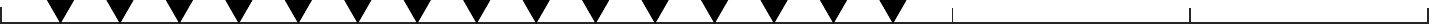}\\[5pt]
\includegraphics[width=0.8\textwidth]{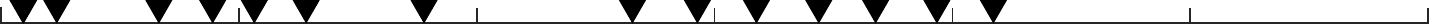}\\[5pt]
\includegraphics[width=0.8\textwidth]{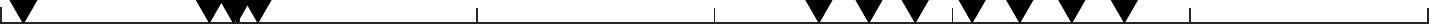}\\[5pt]
\includegraphics[width=0.8\textwidth]{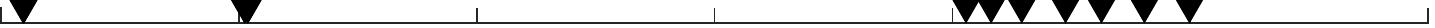}\\[5pt]
\includegraphics[width=0.8\textwidth]{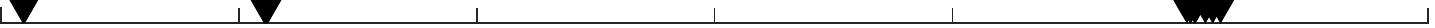}\\[5pt]
\includegraphics[width=0.8\textwidth]{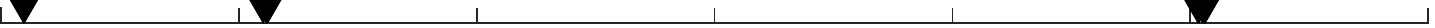}\\
{\footnotesize $\underbrace{(0.38, 0.38, 0.39)}_{3}\quad \underbrace{(4.41, 4.41, 4.46, 4.47, 4.49)}_{5}\quad \underbrace{(20.13, 20.15, 20.20, 20.22, 20.24, 20.25, 20.25)}_{7}$}\\[5pt]
\end{center}

(b) MLMC
\begin{center}
\includegraphics[width=0.8\textwidth]{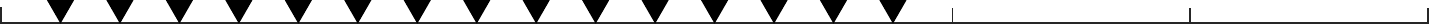}\\[5pt]
\includegraphics[width=0.8\textwidth]{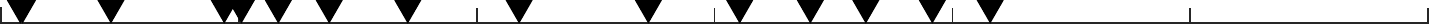}\\[5pt]
\includegraphics[width=0.8\textwidth]{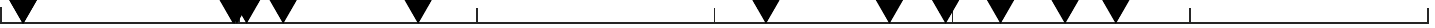}\\[5pt]
\includegraphics[width=0.8\textwidth]{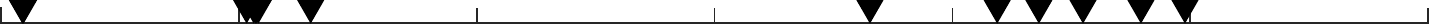}\\[5pt]
\includegraphics[width=0.8\textwidth]{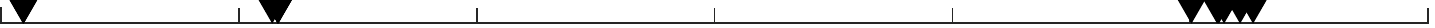}\\[5pt]
\includegraphics[width=0.8\textwidth]{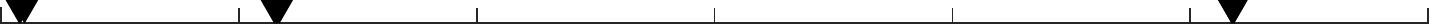}\\
{\footnotesize $\underbrace{(0.31, 0.33, 0.40)}_{3}
\quad \underbrace{(4.59, 4.63, 4.64, 4.64, 4.64, 4.68)}_{6}\quad \underbrace{(20.70, 20.70, 20.70, 20.71, 20.73, 20.74)}_{6}$}
\end{center}
    \caption{Design parameters $(\xi_1,\ldots,\xi_{15})$ within the interval $[0, 24]$ during the optimization process for a single run at the iteration steps $t=0, 100, 500, 1000, 5000, 10000$ (in descending order): (a) the result for stdMC and (b) the result for MLMC. The resulting design is shown in detail respectively at the bottom.}
    \label{fig:blood_times_history}
\end{figure}

Fig.~\ref{fig:blood_times_history} shows the set of design parameters $\xi=(\xi^{(1)}, \ldots, \xi^{(15)})$ obtained at the iteration steps $t=0, 100, 500, 1000, 5000, 10000$ for a single run. The overall convergence behaviors both for the standard Monte Carlo estimator and the MLMC estimator look quite similar to each other. That is, the allocations of 15 sampling times become irregular at the earlier steps compared to the initially equi-spaced design, but then some of sampling times gradually get quite close to each other, ending up with three well-separated clusters. It is interesting to see that stochastic gradient-based optimization naturally finds such so-called \emph{replicate} design that is often considered in the PK applications \cite{RDP15,RDTP14}. Looking into the details, there is a difference between the resulting designs obtained by the standard Monte Carlo estimator and the MLMC estimator. For the standard Monte Carlo estimator, the number of sampling times allocated to each cluster is $3, 5, 7$ (from earlier one to later one), respectively, whereas the corresponding number is $3, 6, 6$, respectively, for the MLMC estimator. These allocations of sampling times are consistent among 10 independent runs for both the estimators. The average sampling time (with its standard deviation) within each cluster, estimated from 10 independent runs, is $0.385\ (0.003), 4.442\ (0.008), 20.202\ (0.006)$ for the standard Monte Carlo estimator, and is $0.367\ (0.010), 4.652\ (0.018), 20.699\ (0.017)$ for the MLMC estimator. The two-sample Wilcoxon test yields the p-value about $10^{-5}$ for all of the three clusters, which supports that the differences between the centers of the clusters obtained by the two estimators are statistically significant.

Fig.~\ref{fig:pk-variance} shows the convergence behaviors of the MLMC correction variables $\Delta \psi_{\xi,\ell}$ at the iteration steps $t=0, T/2, T$ for a single run. Similarly to Fig.~\ref{fig:test-mlmc_convergence}. the mean squares (expected squared $\ell_2$-norms) of $\psi_{\xi,M_\ell}$ and $\Delta \psi_{\xi,\ell}$ are plotted on a $\log_2$ scale as functions of the level $\ell$, where the means are estimated empirically by using $10^5$ i.i.d.\ samples at each level. While the mean square of $\psi_{\xi,M_\ell}$ takes an almost constant value for $\ell> 4$, that of $\Delta \psi_{\xi,\ell}$ decreases geometrically as the level increases. The linear regression of the data for the range $1\leq \ell\leq 10$ provides estimations of $\beta$ as $0.80, 1.36, 1.47$, respectively. The result on the case $\beta\leq 1$ is not covered by Theorem~\ref{thm:mlmc_conv}, and in such a case, we do not have a right choice of $w_{\ell}$ which leads to both finite expected cost and finite expected squared $\ell_2$-norm. Further theoretical investigation is needed to address this issue. On the other hand, the result $\beta>1$ for the steps $t=T/2, T$ is as expected from our theoretical result. Nonetheless, our choice $w_\ell \propto 2^{-3\ell/2}$ might be a bit aggressive in the sense that the expected squared $\ell_2$-norm of the MLMC estimator possibly does not converge, although we see no evidence of this in our experiments. A practical issue on how to choose $w_\ell$ properly depending on the problem at hand is also left open for future research.

\begin{figure}
    \centering
    \includegraphics[width=0.32\textwidth]{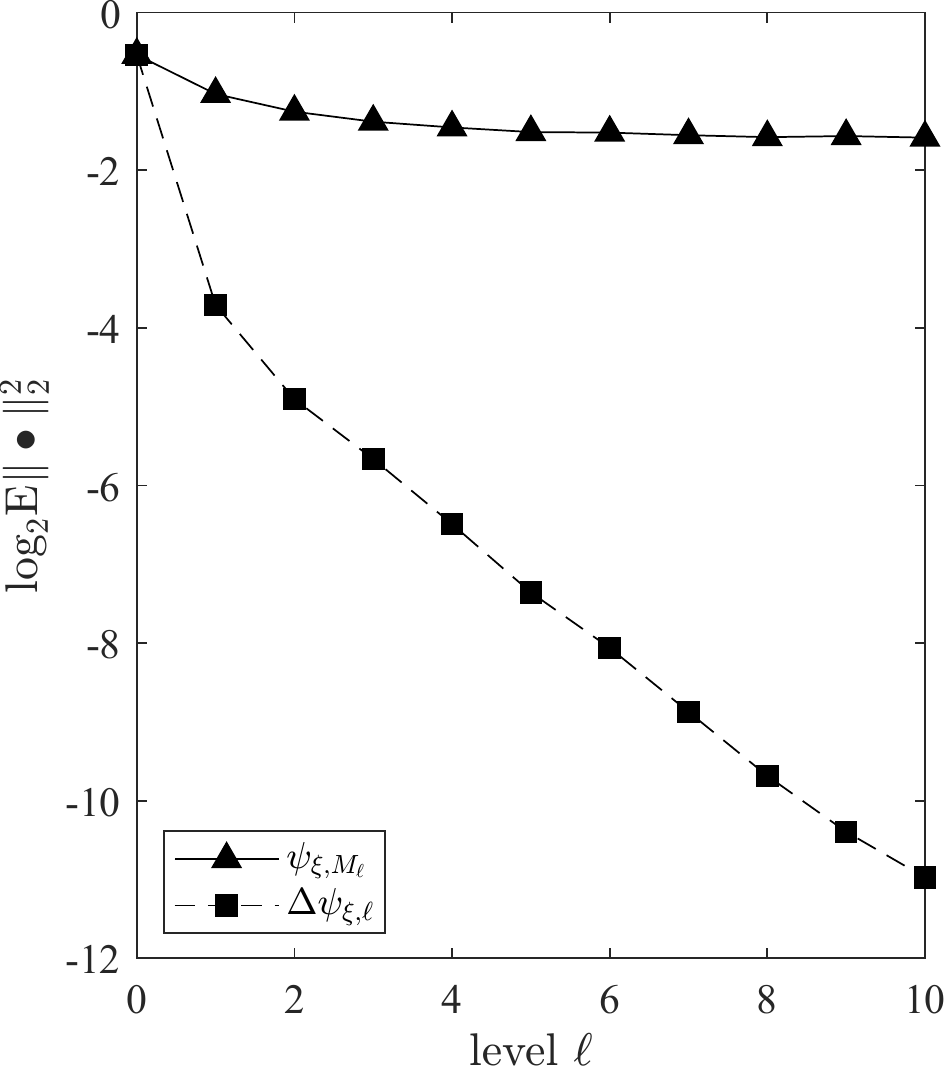}
    \includegraphics[width=0.32\textwidth]{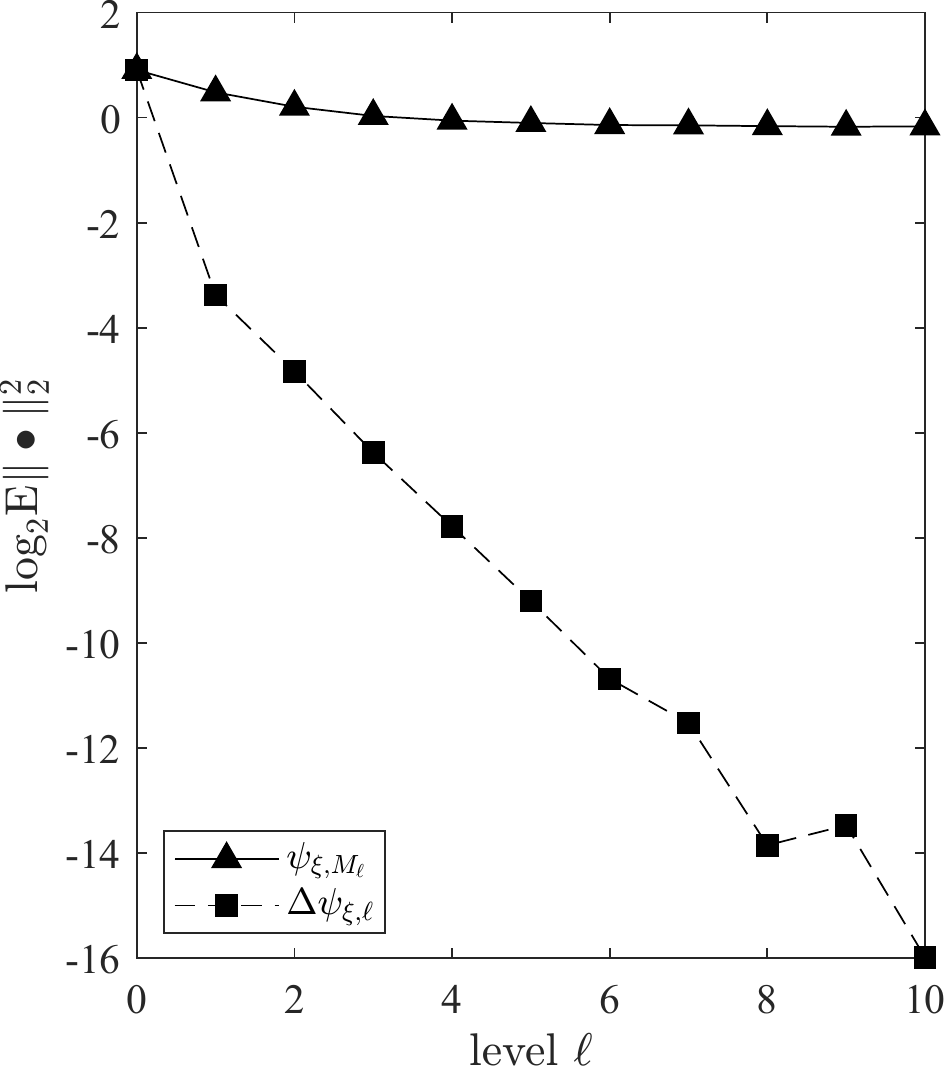}
    \includegraphics[width=0.32\textwidth]{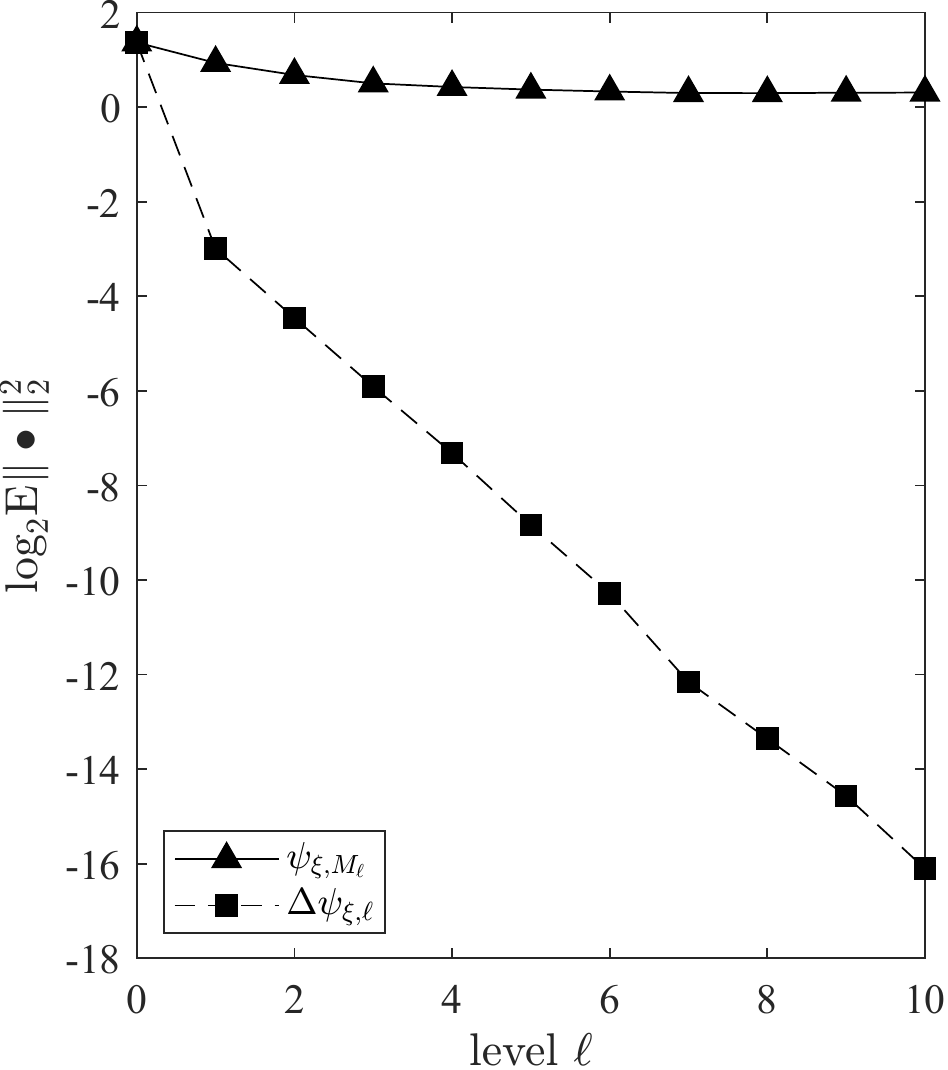}
    \caption{The mean squares of the variables $\psi_{\xi,M_\ell}$ and $\Delta \psi_{\xi,\ell}$ for the PK model at the iteration steps $t=0, T/2, T$}
    \label{fig:pk-variance}
\end{figure}

Finally, Fig.~\ref{fig:pk-eig_iterations} shows the behaviors of the expected information gain $U$ as a function of the number of iteration steps. For this problem, the expected information gain for any design parameter $\xi$ cannot be evaluated exactly, so that we use a randomized variant of the MLMC estimator introduced in \cite{GHI20} with $10^6$ outer samples to estimate the expected information gain for every 500 steps. As 10 independent runs are performed, we plot the average of 10 estimated values in mark, while the shaded area represents the linearly interpolated standard error. We can see that the expected information gain increases with some fluctuation as the iteration proceeds, and converges to a constant value. The average converged value for the MLMC estimator is $4.544$, which is slightly larger than $4.535$ obtained for the standard Monte Carlo estimator. Note that the expected information gain for the initial design is estimated as $3.774$, which is well below the maximum values obtained both for the standard Monte Carlo estimator and the MLMC estimator. Just to provide an intuition of this improvement, assume that each individual variable in $\theta$ remains independent and follows a normal distribution with an equal variance even after observing $Y$, which is usually not true. Then it is inferred that the variance of each variable after observing $Y$ with the initial design is reduced on average by the factor $(\exp(3.774/3))^2 \approx 12.379$, whereas that with the resulting design by our proposed optimization algorithm is $(\exp(4.554/3))^2 \approx 20.822$. Although the increment of the maximum expected information gain by using the MLMC estimator seems marginal as compared to the the standard Monte Carlo estimator in this example, it is important to emphasize again that the resulting experimental designs are qualitatively different.

\begin{figure}
    \centering
    \includegraphics[width=0.8\textwidth]{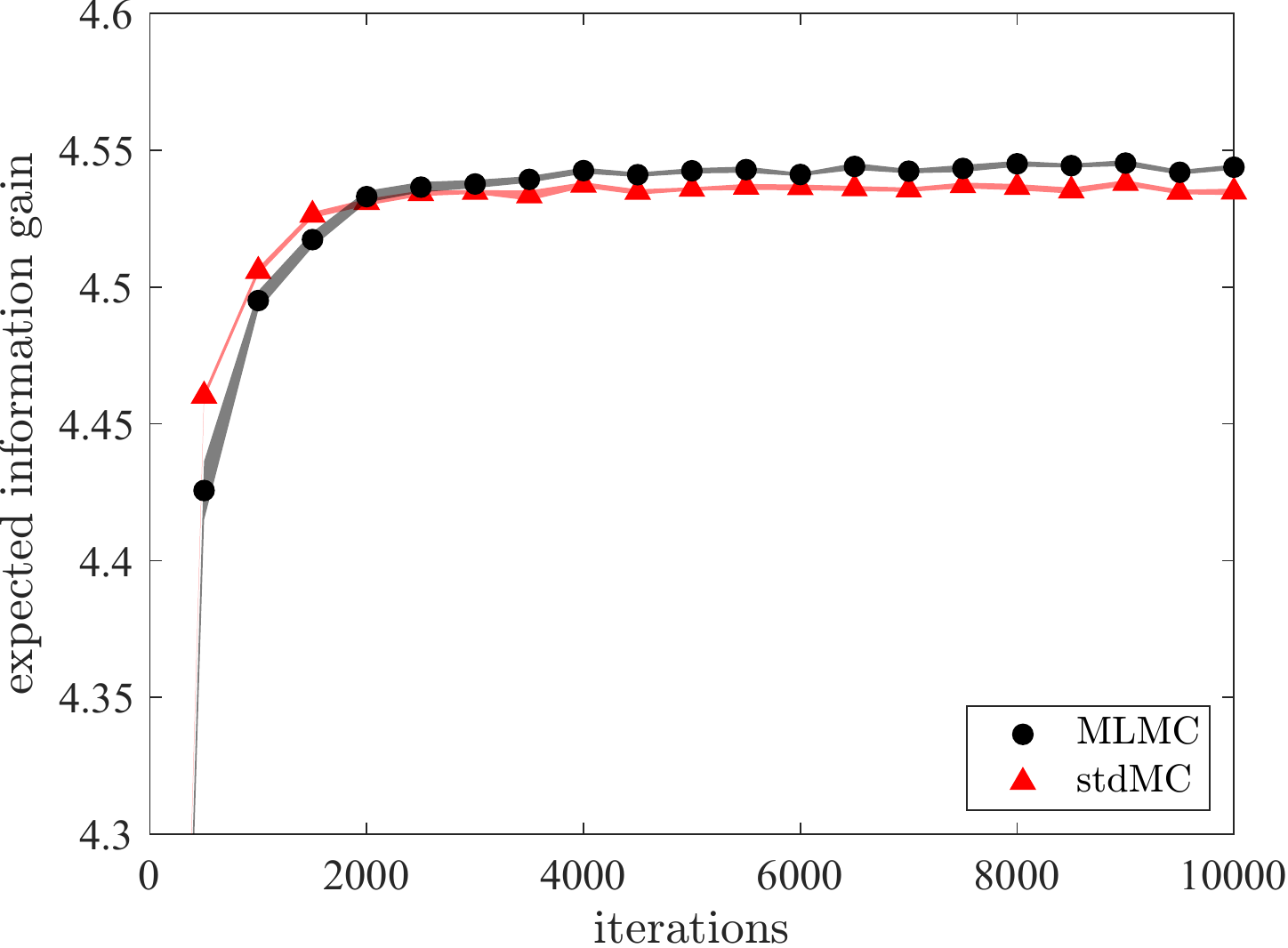}
    \caption{The behavior of the expected information gain as a function of the number of iteration steps for the PK model}
    \label{fig:pk-eig_iterations}
\end{figure}

\section{Conclusion}\label{sec:end}
In this paper we have developed an efficient stochastic algorithm to optimize Bayesian experimental designs such that the expected information gain is maximized. Since the gradient of the expected information gain with respect to design parameters is expressed as a nested expectation, a straightforward use of stochastic gradient-based optimization algorithms in which the number of inner Monte Carlo samples is kept fixed only gives a biased solution of Bayesian experimental design unless i.i.d.\ sampling from the exact posterior distribution is possible. To overcome this issue, we have introduced an unbiased antithetic multilevel Monte Carlo estimator for the gradient of the expected information gain, and have proven under some conditions that our estimator is unbiased and has finite expected squared $\ell_2$-norm and finite computational cost per one sample. This way, combining our unbiased multilevel estimator with stochastic gradient-based optimization algorithms leads to a novel stochastic algorithm to search for optimal Bayesian experimental designs without suffering from any bias. Numerical experiments for a simple test case show that our proposed algorithm can find the true optimal Bayesian experimental design with the convergence behavior as expected from the standard stochastic optimization theory which is built upon the underlying assumption that an unbiased gradient estimation is possible. In contrast, using the standard Monte Carlo estimator with a fixed number of inner samples fails to reach the optimal design. Moreover, our proposed algorithm performs well for a more realistic pharmacokinetic test problem and gives a higher expected information gain and qualitatively different sampling times compared to designs obtained by the existing standard Monte Carlo estimator.

\section*{Acknowledgements}
The authors would like to thank the reviewers for their helpful comments and suggestions which lead to a significant improvement over the original manuscript. The authors are also grateful to Takuro Mori (University of Tokyo) for his help on numerical experiments during a revision process. TG would like to thank Professor Mike Giles (University of Oxford) for useful discussions and comments at an early stage of this research.

\appendix
\section{Proof of Theorem~\ref{thm:mlmc_conv}}\label{app:proof}
The proof for the first assertion follows an argument similar to that of \cite[Lemma~3.9]{HGGT20} which considers a nested expectation involving the ratio of two scalar inner conditional expectations. Since the numerator is vector-valued in our setting, however, we give a proof for the sake of completeness.

First let us recall the following result proven, for instance, in  \cite[Lemma~1]{GG19}. 
\begin{lemma}\label{lem:concentration}
Let $X$ be a real-valued random variable with mean zero, and let $\overline{X}_N$ be an average of N i.i.d.\ samples of $X$. If $\EE[|X|^u]<\infty$ for $u> 2$, there exists a constant $C_u>0$ depending only on $u$ such that
\[ \EE\left[ \left| \overline{X}_N\right|^u\right]\leq C_u\frac{\EE[|X|^u]}{N^{u/2}}\quad \text{and}\quad \PP\left[ \left| \overline{X}_N\right|>c\right]\leq C_u\frac{\EE[|X|^u]}{c^uN^{u/2}}, \]
for any $c>0$.
\end{lemma}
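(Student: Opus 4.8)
The plan is to derive the $u$-th moment bound first and then obtain the tail bound as an immediate consequence of Markov's inequality. Write $\overline{X}_N=\frac{1}{N}\sum_{i=1}^N X_i$ where $X_1,\ldots,X_N$ are i.i.d.\ copies of $X$, so that $\EE[|\overline{X}_N|^u]=N^{-u}\EE[|S_N|^u]$ with $S_N:=\sum_{i=1}^N X_i$. Since the $X_i$ are independent with mean zero and $u>2\geq 1$, the Marcinkiewicz--Zygmund inequality applies and yields a constant $B_u>0$ depending only on $u$ such that
\[ \EE\left[ |S_N|^u\right] \leq B_u\, \EE\left[ \left( \sum_{i=1}^N X_i^2\right)^{u/2}\right]. \]
This reduces the problem to controlling the right-hand side, and it is here that the mean-zero assumption does its essential work.

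Next I would bound the sum of squares by elementary convexity. Because $u/2>1$, the map $x\mapsto x^{u/2}$ is convex on $[0,\infty)$, so Jensen's inequality gives $\left(\frac{1}{N}\sum_{i=1}^N X_i^2\right)^{u/2}\leq \frac{1}{N}\sum_{i=1}^N |X_i|^u$, and hence $\left(\sum_{i=1}^N X_i^2\right)^{u/2}\leq N^{u/2-1}\sum_{i=1}^N |X_i|^u$. Taking expectations and using that the $X_i$ are identically distributed yields
\[ \EE\left[ \left(\sum_{i=1}^N X_i^2\right)^{u/2}\right] \leq N^{u/2-1}\cdot N\cdot \EE[|X|^u] = N^{u/2}\EE[|X|^u]. \]
Combining this with the Marcinkiewicz--Zygmund bound gives $\EE[|S_N|^u]\leq B_u N^{u/2}\EE[|X|^u]$, and therefore $\EE[|\overline{X}_N|^u]=N^{-u}\EE[|S_N|^u]\leq B_u N^{-u/2}\EE[|X|^u]$, which is the first claimed inequality with $C_u=B_u$.

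For the tail bound I would simply apply Markov's inequality to the nonnegative random variable $|\overline{X}_N|^u$: for any $c>0$,
\[ \PP\left[ |\overline{X}_N|>c\right] = \PP\left[ |\overline{X}_N|^u>c^u\right] \leq \frac{\EE[|\overline{X}_N|^u]}{c^u} \leq C_u\frac{\EE[|X|^u]}{c^u N^{u/2}}, \]
with the same constant $C_u$, which completes the proof.

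The main obstacle is the first step: obtaining the sharp $N^{u/2}$ scaling for a general non-integer moment order $u>2$ genuinely requires a moment inequality for sums of independent mean-zero variables, such as the Marcinkiewicz--Zygmund (or Rosenthal) inequality, and one must verify that its constant depends only on $u$ and not on $N$ or on the law of $X$. The remaining steps, namely the convexity estimate and Markov's inequality, are elementary. If a self-contained argument were preferred, one could instead establish the Marcinkiewicz--Zygmund inequality by a symmetrization and Khintchine-inequality argument; but since the statement is quoted from \cite{GG19}, it suffices to invoke it directly.
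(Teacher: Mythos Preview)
Your proof is correct. The paper does not actually prove this lemma; it merely recalls it as a known result, citing \cite{GG19}. Your argument via the Marcinkiewicz--Zygmund inequality, followed by the convexity bound $\left(\sum_{i=1}^N X_i^2\right)^{u/2}\leq N^{u/2-1}\sum_{i=1}^N |X_i|^u$ and Markov's inequality, is the standard route and is exactly how the cited result is established. One minor remark: your parenthetical that ``it is here that the mean-zero assumption does its essential work'' is slightly misplaced, since the mean-zero hypothesis is used in the Marcinkiewicz--Zygmund step itself rather than in the subsequent convexity estimate; but this does not affect the validity of the argument.
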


For any $\theta$, $\epsilon$ and $\xi$, we write $\rho(f_{\xi}(\theta,\epsilon)\mid \xi)=\EE_{\theta'}\left[ \rho(f_{\xi}(\theta,\epsilon)\mid \theta',\xi)\right]$ and also $\nabla_{\xi}\rho(f_{\xi}(\theta,\epsilon)\mid \xi)=\EE_{\theta'}\left[\nabla_{\xi} \rho(f_{\xi}(\theta,\epsilon)\mid \theta',\xi)\right]$. For randomly chosen $\theta$ and $\epsilon$, we define an extreme event $A$ by
\[ A :=  \left\{ \left|\frac{ \varrho_{\xi,M_02^{\ell-1},q}^{(a)}(\theta,\epsilon)}{\rho(f_{\xi}(\theta,\epsilon)\mid \xi)}-1\right| >\frac{1}{2}\right\} \bigcup \left\{ \left|\frac{ \varrho_{\xi,M_02^{\ell-1},q}^{(b)}(\theta,\epsilon)}{\rho(f_{\xi}(\theta,\epsilon)\mid \xi)}-1\right| >\frac{1}{2}\right\} .\]
Then we have
\begin{align}\label{eq:proof1}
    \EE[\|\Delta \psi_{\xi,\ell}\|_2^2] = \EE[\|\Delta \psi_{\xi,\ell}\|_2^2\bsone_A]+\EE[\|\Delta \psi_{\xi,\ell}\|_2^2\bsone_{A^c}],
\end{align}
where $\bsone_{\bullet}$ denotes the indicator function of an event $\bullet$ and $A^c$ denotes the complement of the event $A$.

Let us look at the first term on the right-hand side of \eqref{eq:proof1}. Since we use the same i.i.d.\ samples of $\theta'\sim q$ in the denominator and numerator for the three terms of $\Delta \psi_{\xi,\ell}$, i.e., $\psi_{\xi,M_0 2^\ell,q}, \psi_{\xi,M_0 2^{\ell-1},q}^{(a)}, \psi_{\xi,M_0 2^{\ell-1},q}^{(b)}$, it follows from the assumption
\[ \sup_{\theta,\theta',\epsilon} \left\| \nabla_{\xi}\log \rho(f_{\xi}(\theta,\epsilon)\mid \theta',\xi)\right\|_{\infty} =: \varrho_{\max} <\infty \]
that $\left\|\psi_{\xi,M_0 2^\ell,q}\right\|_2^2, \left\|\psi_{\xi,M_0 2^{\ell-1},q}^{(a)}\right\|_2^2, \left\|\psi_{\xi,M_0 2^{\ell-1},q}^{(b)}\right\|_2^2\leq 2d\varrho^2_{\max}$ where $d$ denotes the cardinality of $\xi$. Applying Jensen's inequality leads to a bound
\begin{align*}
    \left\|\Delta \psi_{\xi,\ell}\right\|_2^2 & \leq \left( \left\|\psi_{\xi,M_0 2^\ell,q}\right\|_2+\frac{\left\|\psi_{\xi,M_0 2^{\ell-1},q}^{(a)}\right\|_2}{2}+\frac{\left\|\psi_{\xi,M_0 2^{\ell-1},q}^{(b)}\right\|_2}{2}\right)^2 \\
    & \leq 2\left\|\psi_{\xi,M_0 2^\ell,q}\right\|_2^2+\left\|\psi_{\xi,M_0 2^{\ell-1},q}^{(a)}\right\|_2^2+\left\|\psi_{\xi,M_0 2^{\ell-1},q}^{(b)}\right\|_2^2 \leq 8d\varrho^2_{\max}.
\end{align*} 
Thus we have
\begin{align*}
\EE[\|\Delta \psi_{\xi,\ell}\|_2^2\bsone_A] \leq 8d\varrho^2_{\max}\EE[\bsone_A]=8d\varrho^2_{\max}\PP[A].
\end{align*} 
Noting that both $\varrho_{\xi,M_02^{\ell-1},q}^{(a)}(\theta,\epsilon)$ and $\varrho_{\xi,M_02^{\ell-1},q}^{(b)}(\theta,\epsilon)$ are unbiased estimates of the target quantity $\rho(f_{\xi}(\theta,\epsilon)\mid \xi)$ using $M_02^{\ell-1}$ random samples of $\theta'\sim q$, it follows from the assumption of the theorem and Lemma~\ref{lem:concentration} that 
\begin{align*}
\PP[A] & \leq \PP\left[ \left|\frac{ \varrho_{\xi,M_02^{\ell-1},q}^{(a)}(\theta,\epsilon)}{\rho(f_{\xi}(\theta,\epsilon)\mid \xi)}-1\right|>\frac{1}{2}\right] + \PP\left[ \left|\frac{ \varrho_{\xi,M_02^{\ell-1},q}^{(b)}(\theta,\epsilon)}{\rho(f_{\xi}(\theta,\epsilon)\mid \xi)}-1\right|>\frac{1}{2}\right] \\
& \leq \frac{2^{u+1}C_u}{(M_02^{\ell-1})^{u/2}}\EE_{\theta,\theta',\epsilon} \left[\left| \frac{\rho(f_{\xi}(\theta,\epsilon)\mid \theta',\xi)\pi_0(\theta')}{ \rho(f_{\xi}(\theta,\epsilon)\mid \xi)q(\theta')}-1\right|^u \right] \\
& \leq \frac{2^{u+1}C_u}{(M_02^{\ell-1})^{u/2}}\left(\EE_{\theta,\theta',\epsilon} \left[\left| \frac{\rho(f_{\xi}(\theta,\epsilon)\mid \theta',\xi)\pi_0(\theta')}{ \rho(f_{\xi}(\theta,\epsilon)\mid \xi)q(\theta')}\right|^u \right]+1\right).
\end{align*}
This gives a bound on the term $\EE[\|\Delta \psi_{\xi,\ell}\|_2^2\bsone_A]$ of order $2^{-(u/2)\ell}$.

Next let us look at the second term on the right-hand side of \eqref{eq:proof1}. By using the antithetic properties \eqref{eq:antithetic}, we have
\begin{align*}
& \Delta \psi_{\xi,\ell} \\
& = \frac{1}{2}\left( \nabla \varrho_{\xi,M_02^{\ell-1},q}^{(a)}(\theta,\epsilon)-\nabla_{\xi}\rho(f_{\xi}(\theta,\epsilon)\mid \xi)\right) \left( \frac{1}{\varrho_{\xi,M_02^{\ell-1},q}^{(a)}(\theta,\epsilon)}-\frac{1}{\rho(f_{\xi}(\theta,\epsilon)\mid \xi)}\right) \\
& \quad + \frac{1}{2}\left( \nabla \varrho_{\xi,M_02^{\ell-1},q}^{(b)}(\theta,\epsilon)-\nabla_{\xi}\rho(f_{\xi}(\theta,\epsilon)\mid \xi)\right) \left( \frac{1}{\varrho_{\xi,M_02^{\ell-1},q}^{(b)}(\theta,\epsilon)}-\frac{1}{\rho(f_{\xi}(\theta,\epsilon)\mid \xi)}\right) \\
& \quad - \left( \nabla \varrho_{\xi,M_02^{\ell},q}(\theta,\epsilon)-\nabla_{\xi}\rho(f_{\xi}(\theta,\epsilon)\mid \xi)\right) \left( \frac{1}{\varrho_{\xi,M_02^{\ell},q}(\theta,\epsilon)}-\frac{1}{\rho(f_{\xi}(\theta,\epsilon)\mid \xi)}\right) \\
& \quad +\frac{1}{2}\frac{\nabla_{\xi}\rho(f_{\xi}(\theta,\epsilon)\mid \xi)}{\varrho_{\xi,M_02^{\ell-1},q}^{(a)}(\theta,\epsilon)}\left( \frac{ \varrho_{\xi,M_02^{\ell-1},q}^{(a)}(\theta,\epsilon)}{\rho(f_{\xi}(\theta,\epsilon)\mid \xi)}-1\right)^2 \\
& \quad +\frac{1}{2}\frac{\nabla_{\xi}\rho(f_{\xi}(\theta,\epsilon)\mid \xi)}{\varrho_{\xi,M_02^{\ell-1},q}^{(b)}(\theta,\epsilon)}\left( \frac{ \varrho_{\xi,M_02^{\ell-1},q}^{(b)}(\theta,\epsilon)}{\rho(f_{\xi}(\theta,\epsilon)\mid \xi)}-1\right)^2 \\
& \quad -\frac{\nabla_{\xi}\rho(f_{\xi}(\theta,\epsilon)\mid \xi)}{\varrho_{\xi,M_02^{\ell},q}(\theta,\epsilon)}\left( \frac{ \varrho_{\xi,M_02^{\ell},q}(\theta,\epsilon)}{\rho(f_{\xi}(\theta,\epsilon)\mid \xi)}-1\right)^2.
\end{align*}
Noting that 
\[ \left|\frac{ \varrho_{\xi,M_02^{\ell-1},q}^{(a)}(\theta,\epsilon)}{\rho(f_{\xi}(\theta,\epsilon)\mid \xi)}-1\right|, \left|\frac{ \varrho_{\xi,M_02^{\ell-1},q}^{(b)}(\theta,\epsilon)}{\rho(f_{\xi}(\theta,\epsilon)\mid \xi)}-1\right|\leq \frac{1}{2} \]
on $A^c$ and that $\varrho_{\xi,M_02^{\ell-1},q}^{(a)}(\theta,\epsilon), \varrho_{\xi,M_02^{\ell-1},q}^{(b)}(\theta,\epsilon)$ and $\rho(f_{\xi}(\theta,\epsilon)\mid \xi)$ are strictly positive by assumption, it holds that
\[ \frac{1}{\varrho_{\xi,M_02^{\ell-1},q}^{(a)}(\theta,\epsilon)}, \frac{1}{\varrho_{\xi,M_02^{\ell-1},q}^{(b)}(\theta,\epsilon)} \leq \frac{2}{\rho(f_{\xi}(\theta,\epsilon)\mid \xi)}. \]
The same bound exists also for $\varrho_{\xi,M_02^{\ell},q}(\theta,\epsilon)$ because of the antithetic property \eqref{eq:antithetic}. By applying Jensen's inequality and then using these bounds, we obtain
\begin{align}
\left\|\Delta \psi_{\xi,\ell}\right\|_2^2 
& \leq 2\left\|  \frac{\nabla \varrho_{\xi,M_02^{\ell-1},q}^{(a)}(\theta,\epsilon)-\nabla_{\xi}\rho(f_{\xi}(\theta,\epsilon)\mid \xi)}{\varrho_{\xi,M_02^{\ell-1},q}^{(a)}(\theta,\epsilon)}\right\|_2^2 \left( \frac{\varrho_{\xi,M_02^{\ell-1},q}^{(a)}(\theta,\epsilon)}{\rho(f_{\xi}(\theta,\epsilon)\mid \xi)}-1\right)^2 \notag \\
& \quad + 2\left\| \frac{\nabla \varrho_{\xi,M_02^{\ell-1},q}^{(b)}(\theta,\epsilon)-\nabla_{\xi}\rho(f_{\xi}(\theta,\epsilon)\mid \xi)}{\varrho_{\xi,M_02^{\ell-1},q}^{(b)}(\theta,\epsilon)}\right\|_2^2 \left( \frac{\varrho_{\xi,M_02^{\ell-1},q}^{(b)}(\theta,\epsilon)}{\rho(f_{\xi}(\theta,\epsilon)\mid \xi)}-1\right)^2 \notag \\
& \quad +4 \left\| \frac{\nabla \varrho_{\xi,M_02^{\ell},q}(\theta,\epsilon)-\nabla_{\xi}\rho(f_{\xi}(\theta,\epsilon)\mid \xi)}{\varrho_{\xi,M_02^{\ell},q}(\theta,\epsilon)}\right\|_2^2 \left( \frac{\varrho_{\xi,M_02^{\ell},q}(\theta,\epsilon)}{\rho(f_{\xi}(\theta,\epsilon)\mid \xi)}-1\right)^2 \notag \\
& \quad +2\left\|\frac{\nabla_{\xi}\rho(f_{\xi}(\theta,\epsilon)\mid \xi)}{\varrho_{\xi,M_02^{\ell-1},q}^{(a)}(\theta,\epsilon)}\right\|_2^2\left( \frac{ \varrho_{\xi,M_02^{\ell-1},q}^{(a)}(\theta,\epsilon)}{\rho(f_{\xi}(\theta,\epsilon)\mid \xi)}-1\right)^4 \notag \\
& \quad +2\left\|\frac{\nabla_{\xi}\rho(f_{\xi}(\theta,\epsilon)\mid \xi)}{\varrho_{\xi,M_02^{\ell-1},q}^{(b)}(\theta,\epsilon)}\right\|_2^2\left( \frac{ \varrho_{\xi,M_02^{\ell-1},q}^{(b)}(\theta,\epsilon)}{\rho(f_{\xi}(\theta,\epsilon)\mid \xi)}-1\right)^4 \notag \\
& \quad +4\left\|\frac{\nabla_{\xi}\rho(f_{\xi}(\theta,\epsilon)\mid \xi)}{\varrho_{\xi,M_02^{\ell},q}(\theta,\epsilon)}\right\|_2^2\left( \frac{ \varrho_{\xi,M_02^{\ell},q}(\theta,\epsilon)}{\rho(f_{\xi}(\theta,\epsilon)\mid \xi)}-1\right)^4 \notag \\
\begin{split}\label{eq:proof2}
& \leq 8\left\| \frac{\nabla \varrho_{\xi,M_02^{\ell-1},q}^{(a)}(\theta,\epsilon)-\nabla_{\xi}\rho(f_{\xi}(\theta,\epsilon)\mid \xi)}{\rho(f_{\xi}(\theta,\epsilon)\mid \xi)}\right\|_2^2 \left( \frac{\varrho_{\xi,M_02^{\ell-1},q}^{(a)}(\theta,\epsilon)}{\rho(f_{\xi}(\theta,\epsilon)\mid \xi)}-1\right)^2 \\
& \quad + 8\left\| \frac{\nabla \varrho_{\xi,M_02^{\ell-1},q}^{(b)}(\theta,\epsilon)-\nabla_{\xi}\rho(f_{\xi}(\theta,\epsilon)\mid \xi)}{\rho(f_{\xi}(\theta,\epsilon)\mid \xi)}\right\|_2^2 \left( \frac{\varrho_{\xi,M_02^{\ell-1},q}^{(b)}(\theta,\epsilon)}{\rho(f_{\xi}(\theta,\epsilon)\mid \xi)}-1\right)^2 \\
& \quad +16 \left\| \frac{\nabla \varrho_{\xi,M_02^{\ell},q}(\theta,\epsilon)-\nabla_{\xi}\rho(f_{\xi}(\theta,\epsilon)\mid \xi)}{\rho(f_{\xi}(\theta,\epsilon)\mid \xi)}\right\|_2^2 \left( \frac{\varrho_{\xi,M_02^{\ell},q}(\theta,\epsilon)}{\rho(f_{\xi}(\theta,\epsilon)\mid \xi)}-1\right)^2 \\
& \quad +8\left\|\frac{\nabla_{\xi}\rho(f_{\xi}(\theta,\epsilon)\mid \xi)}{\rho(f_{\xi}(\theta,\epsilon)\mid \xi)}\right\|_2^2\left( \frac{ \varrho_{\xi,M_02^{\ell-1},q}^{(a)}(\theta,\epsilon)}{\rho(f_{\xi}(\theta,\epsilon)\mid \xi)}-1\right)^4 \\
& \quad +8\left\|\frac{\nabla_{\xi}\rho(f_{\xi}(\theta,\epsilon)\mid \xi)}{\rho(f_{\xi}(\theta,\epsilon)\mid \xi)}\right\|_2^2\left( \frac{ \varrho_{\xi,M_02^{\ell-1},q}^{(b)}(\theta,\epsilon)}{\rho(f_{\xi}(\theta,\epsilon)\mid \xi)}-1\right)^4 \\
& \quad +16\left\|\frac{\nabla_{\xi}\rho(f_{\xi}(\theta,\epsilon)\mid \xi)}{\rho(f_{\xi}(\theta,\epsilon)\mid \xi)}\right\|_2^2\left( \frac{ \varrho_{\xi,M_02^{\ell},q}(\theta,\epsilon)}{\rho(f_{\xi}(\theta,\epsilon)\mid \xi)}-1\right)^4.
\end{split}
\end{align}

Let us focus on the third term of \eqref{eq:proof2}. Applying H\"{o}lder's inequality gives
\begin{align*}
    & \EE\left[ \left\| \frac{\nabla \varrho_{\xi,M_02^{\ell},q}(\theta,\epsilon)-\nabla_{\xi}\rho(f_{\xi}(\theta,\epsilon)\mid \xi)}{\rho(f_{\xi}(\theta,\epsilon)\mid \xi)}\right\|_2^2 \left( \frac{\varrho_{\xi,M_02^{\ell},q}(\theta,\epsilon)}{\rho(f_{\xi}(\theta,\epsilon)\mid \xi)}-1\right)^2\bsone_{A^c}\right] \\
    & \leq \EE\left[\left\| \frac{\nabla \varrho_{\xi,M_02^{\ell},q}(\theta,\epsilon)-\nabla_{\xi}\rho(f_{\xi}(\theta,\epsilon)\mid \xi)}{\rho(f_{\xi}(\theta,\epsilon)\mid \xi)}\right\|_2^2 \right. \\
    & \quad \qquad \times \left. 2^{\max(4-u,0)} \left| \frac{\varrho_{\xi,M_02^{\ell},q}(\theta,\epsilon)}{\rho(f_{\xi}(\theta,\epsilon)\mid \xi)}-1\right|^{\min(u,4)-2}\right] \\
    & \leq \left( \EE\left[\left\| \frac{\nabla \varrho_{\xi,M_02^{\ell},q}(\theta,\epsilon)-\nabla_{\xi}\rho(f_{\xi}(\theta,\epsilon)\mid \xi)}{\rho(f_{\xi}(\theta,\epsilon)\mid \xi)}\right\|_2^{\min(u,4)} \right] \right)^{2/\min(u,4)} \\
    & \quad \qquad \times 2^{\max(4-u,0)}\left( \EE\left[\left| \frac{\varrho_{\xi,M_02^{\ell},q}(\theta,\epsilon)}{\rho(f_{\xi}(\theta,\epsilon)\mid \xi)}-1\right|^{\min(u,4)} \right] \right)^{1-2/\min(u,4)}. \end{align*}
Using Jensen's inequality and Lemma~\ref{lem:concentration}, the first factor above is bounded by
\begin{align*}
    & \EE\left[\left\| \frac{\nabla \varrho_{\xi,M_02^{\ell},q}(\theta,\epsilon)-\nabla_{\xi}\rho(f_{\xi}(\theta,\epsilon)\mid \xi)}{\rho(f_{\xi}(\theta,\epsilon)\mid \xi)}\right\|_2^{\min(u,4)} \right] \\
    & \leq d^{\min(u,4)/2-1}\EE\left[\left\| \frac{\nabla \varrho_{\xi,M_02^{\ell},q}(\theta,\epsilon)-\nabla_{\xi}\rho(f_{\xi}(\theta,\epsilon)\mid \xi)}{\rho(f_{\xi}(\theta,\epsilon)\mid \xi)}\right\|_{\min(u,4)}^{\min(u,4)} \right]\\
    & \leq \frac{d^{\min(u,4)/2-1}C_{\min(u,4)}}{(M_02^{\ell})^{\min(u,4)/2}}\\
    & \quad \times \EE_{\theta,\theta',\epsilon}\left[\left\| \frac{\nabla_{\xi} \rho(f_{\xi}(\theta,\epsilon)\mid \theta',\xi)\pi_0(\theta')/q(\theta')-\nabla_{\xi}\rho(f_{\xi}(\theta,\epsilon)\mid \xi)}{\rho(f_{\xi}(\theta,\epsilon)\mid \xi)}\right\|_{\min(u,4)}^{\min(u,4)} \right] \\
    & \leq \frac{2^{\min(u,4)-1}d^{\min(u,4)/2-1}C_{\min(u,4)}}{(M_02^{\ell})^{\min(u,4)/2}} \\
    & \quad \times \EE_{\theta,\theta',\epsilon}\left[\left\| \frac{\nabla_{\xi} \rho(f_{\xi}(\theta,\epsilon)\mid \theta',\xi)\pi_0(\theta')/q(\theta')}{\rho(f_{\xi}(\theta,\epsilon)\mid \xi)}\right\|_{\min(u,4)}^{\min(u,4)}+\left\|\frac{\nabla_{\xi}\rho(f_{\xi}(\theta,\epsilon)\mid \xi)}{\rho(f_{\xi}(\theta,\epsilon)\mid \xi)}\right\|_{\min(u,4)}^{\min(u,4)} \right] \\
    & \leq \frac{2^{\min(u,4)-1}d^{\min(u,4)/2-1}C_{\min(u,4)}}{(M_02^{\ell})^{\min(u,4)/2}} \\
    & \quad \times \EE_{\theta,\theta',\epsilon}\left[\left\| \frac{\nabla_{\xi} \rho(f_{\xi}(\theta,\epsilon)\mid \theta',\xi)\pi_0(\theta')/q(\theta')}{\rho(f_{\xi}(\theta,\epsilon)\mid \theta',\xi)\pi_0(\theta')/q(\theta')}\cdot \frac{\rho(f_{\xi}(\theta,\epsilon)\mid \theta',\xi)\pi_0(\theta')/q(\theta')}{\rho(f_{\xi}(\theta,\epsilon)\mid \xi)}\right\|_{\min(u,4)}^{\min(u,4)}\right. \\
    &\quad \qquad \qquad \left. +\left\|\frac{\nabla_{\xi}\rho(f_{\xi}(\theta,\epsilon)\mid \xi)}{\rho(f_{\xi}(\theta,\epsilon)\mid \xi)}\right\|_{\min(u,4)}^{\min(u,4)} \right] \\
    & \leq \frac{2^{\min(u,4)-1}d^{\min(u,4)/2}\varrho_{\max}^{\min(u,4)}C_{\min(u,4)}}{(M_02^{\ell})^{\min(u,4)/2}} \\
    & \quad \times \left(\EE_{\theta,\theta',\epsilon} \left[\left| \frac{\rho(f_{\xi}(\theta,\epsilon)\mid \theta',\xi)\pi_0(\theta')}{ \rho(f_{\xi}(\theta,\epsilon)\mid \xi)q(\theta')}\right|^{\min(u,4)} \right]+1\right),
\end{align*}
whereas a bound on the second factor directly follows from Lemma~\ref{lem:concentration}, i.e., we have
\begin{align*}
    & \EE\left[\left| \frac{\varrho_{\xi,M_02^{\ell},q}(\theta,\epsilon)}{\rho(f_{\xi}(\theta,\epsilon)\mid \xi)}-1\right|^{\min(u,4)} \right] \\
    & \leq \frac{C_{\min(u,4)}}{(M_02^{\ell})^{\min(u,4)/2}}\EE_{\theta,\theta',\epsilon} \left[\left| \frac{\rho(f_{\xi}(\theta,\epsilon)\mid \theta',\xi)\pi_0(\theta')}{ \rho(f_{\xi}(\theta,\epsilon)\mid \xi)q(\theta')}-1\right|^{\min(u,4)} \right] \\
    & \leq \frac{C_{\min(u,4)}}{(M_02^{\ell})^{\min(u,4)/2}}\left(\EE_{\theta,\theta',\epsilon} \left[\left| \frac{\rho(f_{\xi}(\theta,\epsilon)\mid \theta',\xi)\pi_0(\theta')}{ \rho(f_{\xi}(\theta,\epsilon)\mid \xi)q(\theta')}\right|^{\min(u,4)} \right]+1\right).
\end{align*}
Substituting these bounds shows that the third term is of order 
\[ \left(2^{-\min(u,4)\ell/2}\right)^{2/\min(u,4)}\cdot \left(2^{-\min(u,4)\ell/2}\right)^{1-2/\min(u,4)}=2^{-\min(u,4)\ell/2} \]
for given $u>2$. 

Similarly, the expectation of the sixth term of \eqref{eq:proof2} can be bounded above by
\begin{align*}
    & \EE\left[ \left\|\frac{\nabla_{\xi}\rho(f_{\xi}(\theta,\epsilon)\mid \xi)}{\rho(f_{\xi}(\theta,\epsilon)\mid \xi)}\right\|_2^2\left( \frac{ \varrho_{\xi,M_02^{\ell},q}(\theta,\epsilon)}{\rho(f_{\xi}(\theta,\epsilon)\mid \xi)}-1\right)^4\bsone_{A^c}\right] \\
    & \leq 2^{\max(4-u,0)}\EE\left[ \left\|\frac{\nabla_{\xi}\rho(f_{\xi}(\theta,\epsilon)\mid \xi)}{\rho(f_{\xi}(\theta,\epsilon)\mid \xi)}\right\|_2^2\left| \frac{ \varrho_{\xi,M_02^{\ell},q}(\theta,\epsilon)}{\rho(f_{\xi}(\theta,\epsilon)\mid \xi)}-1\right|^{\min(u,4)}\right] \\
    & \leq 2^{\max(4-u,0)}d\varrho^2_{\max}\EE\left[ \left| \frac{ \varrho_{\xi,M_02^{\ell},q}(\theta,\epsilon)}{\rho(f_{\xi}(\theta,\epsilon)\mid \xi)}-1\right|^{\min(u,4)}\right] \\
    & \leq \frac{2^{\max(4-u,0)}d\varrho^2_{\max}C_{\min(u,4)}}{(M_02^{\ell})^{\min(u,4)/2}}\left(\EE_{\theta,\theta',\epsilon} \left[\left| \frac{\rho(f_{\xi}(\theta,\epsilon)\mid \theta',\xi)\pi_0(\theta')}{ \rho(f_{\xi}(\theta,\epsilon)\mid \xi)q(\theta')}\right|^{\min(u,4)} \right]+1\right).
\end{align*}
It is obvious that the other terms of \eqref{eq:proof2} can be bounded similarly. This way we obtain a bound on the term $\EE[\|\Delta \psi_{\xi,\ell}\|_2^2\bsone_{A^c}]$ of order $2^{-\min(u,4)\ell/2}$, which completes the proof of the first assertion of the theorem.

Let us move on to the second assertion. By choosing $w_{\ell} \propto 2^{-\tau \ell}$, it follows from the first assertion that
\begin{align*}
\sum_{\ell=0}^{\infty}\frac{\EE[\|\Delta \psi_{\xi,\ell}\|_2^2]}{w_\ell} \propto \sum_{\ell=0}^{\infty}2^{-(\beta-\tau) \ell},
\end{align*}
and
\begin{align*}
\sum_{\ell=0}^{\infty}C_\ell w_\ell \propto \sum_{\ell=0}^{\infty}2^{-(\tau-1) \ell}.
\end{align*}
Thus, if $1<\tau<\beta$, these two quantities are obviously bounded. It is important to remark that we have these finite bounds on the expected squared $\ell_2$-norm and the expected computational cost of the random variable $\Delta \psi_{\xi,\ell}/w_\ell$, since we assume $u> 2$, which ensures $\beta> 1$.

\bibliographystyle{siamplain}
\bibliography{references}

\end{document}